\newtheorem{theorem}{Theorem}[section]
\newtheorem{theorem*}{Theorem}
\newtheorem{lemma}[theorem]{Lemma}
\newtheorem{proposition}[theorem]{Proposition}
\newtheorem{proposition*}[theorem*]{Proposition}
\newtheorem{corollary}[theorem]{Corollary}
\newtheorem{corollary*}[theorem*]{Corollary}
\newtheorem{definition}[theorem]{Definition}
\theoremstyle{remark}
\newtheorem{remark}[theorem]{Remark}
\newtheorem{remark*}[theorem*]{Remark}
\newtheorem{note*}[theorem*]{Note}
\newcommand{\EE}{{\mathbb E}}
\newcommand{\NN}{{\mathbb N}}
\title{On profitability of stubborn mining}
\subjclass[2010]{68M01, 60G40, 91A60.}
\keywords{Bitcoin, blockchain, proof-of-work, selfish mining, Catalan numbers}
\author[C. Grunspan]{Cyril Grunspan}
\address{Cyril Grunspan\newline{}\indent L\'eonard de Vinci P\^ole Univ, Research Center, Labex R\'efi\newline{}\indent Paris, France, }
\email{cyril.grunspan@devinci.fr}
\author[R. P\'{e}rez-Marco]{Ricardo P\'{e}rez-Marco}
\address{Ricardo P\'{e}rez-Marco\newline{}\indent CNRS, IMJ-PRG, Labex R\'efi \newline{}\indent Paris, France}
\email{ricardo.perez.marco@gmail.com}
\address{\tiny Author's Bitcoin Beer Address (ABBA)\footnote{\tiny Send some anonymous and moderate satoshis to support our research at the pub.}:\newline{}\indent 1KrqVxqQFyUY9WuWcR5EHGVvhCS841LPLn} 
\address{\includegraphics[scale=0.8]{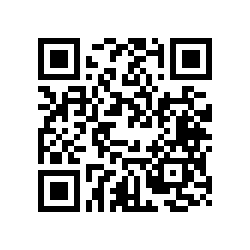}}
\begin{document}

\begin{abstract}
  We compute and compare profitabilities of stubborn mining strategies that are variations of 
  selfish mining. These are deviant mining strategies 
  violating Bitcoin's network protocol rules. We apply the foundational set-up from our previous companion 
  article on the profitability of
  selfish mining, and the new martingale techniques to get a closed-form computation for 
  the revenue ratio, which is the correct benchmark for profitability. Catalan numbers and Catalan distributions 
  appear in the closed-form computations. 
  This marks the first appearance of Catalan numbers in the Mathematics of the Bitcoin protocol.
\end{abstract}

{\maketitle}

\section{Introduction}

In our previous article \cite{GPM2018} we gave a rigorous foundation for the profitability analysis 
of alternative mining strategies in the 
Bitcoin network \cite{N08}. As for games with repetition, it depends on the proper 
analysis of the revenue and the duration over attack cycles. 
More precisely, we prove that the expected revenue $\EE[R]$ and expected duration $\EE[\tau]$ 
over an attack cycle give the ``Revenue Ratio''
$$
\Gamma=\frac{\EE[R]}{\EE[\tau]}
$$
This is the correct benchmark for the profitability of the strategy.

\medskip

This analysis was applied to the ``Selfish Mining'' strategy from \cite{ES14}. 
We also introduced in \cite{GPM2018} 
martingale tools that yield, by the application of Doob's Stopping Time Theorem, the expected 
duration of the attack cycles.
With the Markov model (used in \cite{ES14} and other articles in the literature) one cannot
compute the expected cycle duration $\EE[\tau]$. As we prove in \cite{GPM2018}, it is only after a 
difficulty adjustment that Selfish Mining or any other ``block withholding strategy'' can become profitable.
To compute the expected time of the next difficulty adjustment it is necessary to be able to compute the 
expected duration of cyles $\EE[\tau]$.

\medskip

In this article we apply again these new powerful techniques 
to some of the ``Stubborn Mining'' strategies presented 
in \cite{NKMS2016}. So far only a numerical Monte-Carlo analysis seems to be 
known for these strategies. 
For the ``Lead-Stubborn Mining'' (LSM) and the ``Equal Fork 
Stubborn Mining'' (EFSM) strategies we compute the ``Revenue Ratios''. 

\medskip

We fix some notations. Let $b>0$ be the block reward, and $\tau_0$ the average inter-block 
validation time for the total network (around $10$ minutes for the Bitcoin network). We denote 
by $q$ the relative hashing power of the attacker. Let 
$\gamma$ be the fraction of the honest network that the attacker attracts to mine on top of his fork. 
For a miner that after a difficulty adjustment has a Revenue Ratio $\tilde \Gamma$ we define 
his apparent hashrate $\tilde q$ by
$$
\tilde q= \frac{\tilde \Gamma \cdot \tau_0}{b} \ .
$$
The apparent hashrate of a miner can also be defined after a difficulty adjustment as the average
proportion of blocks mined by the miner in the official blockchain.
We make use of
$$
C(x) = \frac{1-\sqrt{1-4x}}{2x}=\frac{2}{1 + \sqrt{1 - 4 x}} =\sum_{n=0}^{+\infty} C_n x^n
$$
which is the generating series for Catalan numbers $(C_n)_{n\geq 0}$ (see Appendix \ref{appendix_Catalan_Distribution}).

\begin{theorem*}[Lead-Stubborn mining]
  The revenue ratio of the ``Lead-stubborn mining'' strategy is 
$$
\Gamma (\text{LSM}) = \left( q - \frac{pq (p - q)  (1 - \gamma)}{\gamma}
     \cdot \frac{1 - p (1 - \gamma) C ((1 - \gamma) pq)}{p + q (p - q)}
     \right)  \frac{b}{\tau_0} \ \ .
$$ 

After a difficulty adjustment, the apparent hashrate $\tilde{q}_{{LSM}}$ of the stubborn miner is
$$
\tilde{q}_{{LSM}} = q \cdot \frac{p + pq - q^2}{p + pq -q} - \frac{pq (p - q)  
(1 - \gamma)}{\gamma} \cdot \frac{1 - p (1 - \gamma) C ((1 - \gamma) pq)}{p + pq - q}  \ \ .
$$
\end{theorem*}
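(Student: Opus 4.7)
The plan is to apply the framework developed in the companion paper \cite{GPM2018} to the LSM strategy: identify the attack cycle, compute its expected duration $\EE[\tau]$ via martingale techniques, and compute the expected revenue $\EE[R]$ by a careful accounting over the cycle. The revenue ratio formula then follows from $\Gamma(\text{LSM}) = \EE[R]/\EE[\tau]$, and the apparent hashrate from the definition $\tilde{q} = \tilde{\Gamma}\,\tau_0 / b$ applied after the difficulty adjustment.

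The first step is to set up the underlying random process. At each step, with probability $q$ the attacker mines a block on his private chain, and with probability $p = 1-q$ a block is mined by the honest network; let $S_n$ denote the attacker's private lead. An attack cycle starts at $S_0 = 0$ and runs until the next time $\tau$ at which the public and private chains share a common head. Under LSM, when the lead drops from $1$ to $0$ after an honest block, the attacker publishes his matching block to create a competitive fork at equal height; the mining power then splits among the attacker (weight $q$), the honest miners attracted to the attacker's branch (weight $\gamma p$), and the non-attracted honest miners (weight $(1-\gamma) p$), and under LSM the attacker continues mining stubbornly on his own fork until it is definitively beaten. Applying Doob's optional stopping theorem to the martingale $M_n = S_n - (q-p) n$ yields $(q-p)\,\EE[\tau] = \EE[S_\tau]$, and $\EE[S_\tau]$ is computed by enumerating the terminal configurations, weighted by their probabilities.

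The heart of the proof is the computation of $\EE[R]$. After the first tie, the revenue depends on a possibly long sequence of competitive sub-rounds during which the gap between the attacker's fork and the honest fork performs a biased random walk. The natural enumerating tool here is the Catalan series: the number of non-negative lattice paths of length $2n$ returning to the origin is $C_n$, and each such excursion carries weight $((1-\gamma)\,p\,q)^n$ from the balanced $\{q,\,(1-\gamma) p\}$ steps, while the attracted-honest blocks (weight $\gamma p$) contribute a separate factor that does not affect the cancelling structure. Summing the revenue contributions over all possible lengths and compositions of the fork produces infinite series that collapse, via the defining identity $x\, C(x)^2 = C(x) - 1$, into the quantity $p(1-\gamma)\, C((1-\gamma)\,p\,q)$ appearing in the statement. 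This combinatorial bookkeeping is the main obstacle: one must distinguish the three types of blocks mined during the fork phase, track which of them ultimately join the official chain, and identify the resulting sums as values of the Catalan generating function.

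Once $\EE[R]$ and $\EE[\tau]$ are known, the first formula is obtained by direct algebraic simplification, in which the denominator $p + q(p-q)$ and the prefactor $pq(p-q)(1-\gamma)/\gamma$ emerge. For the second formula I would apply the difficulty-adjustment analysis of \cite{GPM2018}: after the adjustment, the effective mean inter-block time of the official chain is renormalised back to $\tau_0$, and this rescales $\EE[\tau]$ by the expected number of official blocks per cycle. Re-expressing $\tilde{\Gamma}_{\text{LSM}}\,\tau_0 / b$ in closed form and simplifying the resulting algebra produces the apparent hashrate $\tilde{q}_{\text{LSM}}$, completing the proof.
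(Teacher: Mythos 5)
Your overall skeleton (compute $\EE[R]$ and $\EE[\tau]$ over an attack cycle, take the quotient, then rescale by the difficulty-adjustment factor $\delta = \EE[\tau]/(\tau_0\,\EE[N(\tau)\vee N'(\tau)])$) is the right one and matches the framework of the paper. But the combinatorial heart of your argument --- where the Catalan numbers come from and how $\gamma$ enters their argument --- is misidentified, and the model you describe would not produce the stated formula. Under LSM there is no prolonged post-tie competition: the attacker releases his entire fork the moment the honest chain catches up, and the cycle is decided by a single final round (won outright by the attacker with probability $q$, won by an attracted honest miner with probability $\gamma p$, lost otherwise). The random-walk/excursion phase you place after the tie does not exist in this strategy. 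The Catalan numbers instead govern the pre-tie phase: the event $\{N'(\tau)=n\}$ corresponds to lattice paths from $(0,0)$ to $(n,n)$ along which the attacker stays strictly ahead, there are $C_{n-1}$ of them, and each has weight $(pq)^n$ --- not $((1-\gamma)pq)^n$. Every honest block contributes weight $p$ to the path count regardless of which branch it extends, since it lengthens the public chain either way; if you weight the balancing steps by $(1-\gamma)p$ as you propose, your probabilities do not even sum to one.

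The factor $(1-\gamma)$ inside $C((1-\gamma)pq)$ arises elsewhere: conditionally on losing the final round, the attacker still keeps all of his blocks up to the last honest block that was mined on top of his fork, and the expected position of that last ``connection point'' is computed by a biased-coin (last-success) lemma, which injects a term $\frac{(1-\gamma)^{n+1}}{\gamma}p$ into the conditional expected revenue given $N'(\tau)=n$. Summing that against $C_{n-1}(pq)^n$ is exactly what produces $\sum_{n\geq 1} C_{n-1}\bigl((1-\gamma)pq\bigr)^{n-1} = C((1-\gamma)pq)$; no identity such as $xC(x)^2 = C(x)-1$ is used or needed. You would also need the extra final-round term in the duration, $\EE[\tau_{LSM}] = \EE[\tau] + q\tau_0$ with $\EE[\tau]=\frac{p}{p-q}\tau_0$, and, for the apparent hashrate, the observation that $|N(\tau_{LSM})-N'(\tau_{LSM})|=1$ at the end of every cycle, so that $\EE[N\vee N'] = \frac{1}{2}\bigl(\EE[\tau_{LSM}]/\tau_0 + 1\bigr)$ and $\delta_{LSM}=\frac{p+pq-q^2}{p+pq-q}$; neither appears in your sketch. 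As written, the plan would not assemble into the stated formulas.
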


\begin{theorem*}[Equal Fork Stubborn mining]
The revenue ratio of the ``Equal Fork Stubborn mining'' strategy is 
$$
\Gamma (\text{EFSM}) = \left( q - \left( \frac{1 - \gamma}{\gamma}
     \right)  (p - q) \left ( 1 - pC ((1 - \gamma) pq)\right ) \right)  \frac{b}{\tau_0} \  \ .
$$
After a difficulty adjustment, 
the apparent hashrate $\tilde{q}_{{EFSM}}$ of the miner is
$$
\tilde{q}_{{EFSM}} = \frac{q}{p} - \frac{(1 - \gamma)  (p-q)}{\gamma p} \left (1 - pC ((1 - \gamma) pq)\right )  \ .
$$
\end{theorem*}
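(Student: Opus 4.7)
The plan is to mirror the approach used for the Lead-Stubborn Mining theorem, adjusting only for the particular publication rule of the Equal Fork Stubborn miner. Recall that in EFSM, whenever the honest network produces a block matching the length of the attacker's private chain, the attacker publishes his own matching chain and keeps mining on his fork, thereby committing to a tie race rather than conceding. I would define an attack cycle as the interval between two consecutive consensus times at which the public blockchain admits a unique longest branch shared by both miners, and then compute $\EE[R]$ and $\EE[\tau]$ over one cycle.

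I would first classify cycles by the identity of the initial block. With probability $p$ the first block is honest, terminating the cycle after one honest block and contributing $0$ to the attacker's revenue. With probability $q$ the first block is the attacker's, and I would track the difference $D_n$ between the attacker's and the public chain lengths as a biased random walk with $+1$ of probability $q$ and $-1$ of probability $p$. The cycle ends either when $D_n$ reaches a threshold at which the attacker safely publishes and collects the full reward, or when $D_n$ returns to $0$ and triggers a tie race resolved by the next block being mined: by the attacker, by the $(1-\gamma)$-fraction of honest miners on the attacker's fork, or by the remaining $\gamma$-fraction on the honest fork.

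The Catalan generating function $C((1-\gamma)pq)$ appears through the enumeration of tie-race extensions. Each additional tie round requires exactly one attacker block and one honest block to be produced without either side taking a decisive lead, contributing a factor $(1-\gamma)pq$ per round; the combinatorial count of such matched excursions is precisely the Catalan count, so summing over all admissible lengths yields $C((1-\gamma)pq)$. After gathering all contributions the expected revenue takes the stated form involving $1 - pC((1-\gamma)pq)$. I would cross-check the answer by specializing to $\gamma = 1$ (pure selfish mining) and by taking $q \to 0$.

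Finally, $\EE[\tau]$ is obtained via Doob's Stopping Time Theorem as in \cite{GPM2018}: the number of total network blocks mined in one cycle is a stopping time for the underlying Poisson block-arrival process, giving $\EE[\tau] = \tau_0 \cdot \EE[N]$. Dividing produces $\Gamma(\text{EFSM})$, and the apparent hashrate $\tilde q_{EFSM}$ follows from the difficulty-adjustment rule together with the definition $\tilde q = \tilde \Gamma \tau_0 / b$; the common factor $1/p$ reflects that in equilibrium the expected number of official blocks per cycle equals $p$ times $\EE[N]$. The main obstacle I anticipate is careful bookkeeping of which attacker blocks are orphaned when tie races are ultimately lost; this is exactly where the subtracted correction proportional to $(p-q)/\gamma$ originates, and pinning down its coefficient requires disentangling the joint law of race length and race winner rather than just the marginal probability of winning.
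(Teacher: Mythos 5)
There is a genuine gap here, and it starts with the model of the attack cycle. For EFSM the cycle does \emph{not} end ``when $D_n$ reaches a threshold at which the attacker safely publishes and collects the full reward, or when $D_n$ returns to $0$ and triggers a tie race resolved by the next block.'' There is no upper absorbing threshold and no decisive tie race: the equal-fork stubborn miner never concedes at a tie and never cashes in a lead; he keeps mining until the honest chain overtakes his fork by exactly one block. The correct stopping time is $\tau_{EFSM}=\inf\{t\ge 0:\ N(t)=N'(t)+1\}$, a one-sided barrier for the walk $N'-N$ at $-1$, and $\EE[\tau_{EFSM}]=\tau_0/(p-q)$ follows from Doob's theorem applied to the compensated Poisson martingales. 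Because your termination rule is wrong, the ensuing revenue accounting cannot be patched: in EFSM the attacker \emph{never} appends blocks to the official chain and his entire revenue comes from replacing honest blocks, namely from those honest blocks (after the first) that happen to be mined by the $\gamma$-fraction building on his fork. The paper's computation conditions on $N'(\tau_{EFSM})=n$, notes that the honest miners have then produced $n+1$ blocks, and uses a ``position of the last success'' lemma for $n+1$ biased coin tosses to get $\EE[R\mid N'=n]=\bigl(n+1-\tfrac{1-(1-\gamma)^{n+1}}{\gamma}\bigr)b$. You explicitly defer exactly this step (``the main obstacle I anticipate is careful bookkeeping of which attacker blocks are orphaned\dots''), but that bookkeeping \emph{is} the proof; without it the coefficient of the correction term is not derived.

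Your account of where $C((1-\gamma)pq)$ comes from is also not right. It is not that ``each additional tie round contributes a factor $(1-\gamma)pq$'': the parameter $\gamma$ plays no role in the timing or counting of blocks, only in where honest blocks land. The Catalan numbers enter through the law of $N'(\tau_{EFSM})$ alone, via a lattice-path count giving $\PP[N'(\tau_{EFSM})=n]=C_n\,p(pq)^n$ (paths from $(0,0)$ to $(n+1,n)$ not crossing the diagonal prematurely); the factor $(1-\gamma)^{n+1}$ arrives separately from the last-success expectation, and only when the two series are multiplied and summed does the argument of the generating function become $(1-\gamma)pq$. Conflating these two sources would not reproduce the stated formula. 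The pieces you do have right --- the probability-$p$ trivial cycle, Doob's theorem for $\EE[\tau]$, and $\delta_{EFSM}=1/p$ from $\EE[N(\tau_{EFSM})]=\alpha\,\EE[\tau_{EFSM}]$ --- are consistent with the paper, but the two central lemmas (the Catalan distribution of $N'(\tau_{EFSM})$ and the conditional revenue formula) are missing.
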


We compare these strategies to ``Honest Mining'' (HM)
and ``Selfish Mining'' (SM) and we determine in the  $(q, \gamma )$ parameter plane which one performs 
the best.

\section{Generalities.}

\subsection{Profitability.}
In \cite{GPM2018}, we studied the profitability of integrable repetition games which 
are composed of cycles, with a finite expected duration $\EE[\tau] <+\infty $. 
The stopping time $\tau$ is also called a \textit{strategy}. 
Mining strategies are repetition games, and sound mining strategies are integrable. 
For the comparison of profitability of two strategies we only need to compare the Revenue
Ratio of each one (see \cite{GPM2018})
$$
\Gamma =\frac{\EE[R]}{\EE[\tau]} 
$$
where $R$ is the revenue over a cycle.

The number of blocks $N'(t)$ and $N(t)$ validated by the attacker and honest miners 
respectively are independent Poisson processes
(see \cite{R12} for background on Poisson Process). 
The attack cycle ends when the honest miners catch-up the attackers. The number of validated blocks in a 
cycle in the official 
blockchain is $N (\tau) \vee N' (\tau)$. We denote by $T_1, T_2, \ldots$ (resp. $T'_1, T'_2, \ldots$) 
the inter-block validation time 
for the honest miners (resp. attackers).

\subsection{Profitability after a difficulty adjustment.}
The following Theorem describes how the Revenue Ratio changes after a difficulty adjustment (see \cite{GPM2018}).

\begin{theorem}\label{rap}
  After a difficulty adjustment the new Revenue Ratio $\tilde \Gamma$ is given by
  $$
  \tilde \Gamma = \Gamma \cdot \delta \ ,
  $$
  where
  $$
  \delta = \frac{\mathbb{E} [\tau]}{\tau_0 \cdot \mathbb{E} [N (\tau) \vee N' (\tau)]} \ .
  $$
\end{theorem}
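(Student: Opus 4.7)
The plan is to reduce everything to the definition of the Revenue Ratio $\Gamma = \mathbb{E}[R]/\mathbb{E}[\tau]$ and to carefully track how the statistics of a cycle transform under the Bitcoin difficulty-adjustment rule. First I would recall the rule itself: the protocol retargets difficulty so that, in steady state, the average time between two consecutive blocks of the official blockchain is $\tau_0$. In particular, this observation concerns only blocks that make it into the official chain, which per cycle is $N(\tau) \vee N'(\tau)$, not the total number of blocks mined (orphan blocks are not seen by the retargeting mechanism).

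Next I would compute the pre-adjustment average time per official block. Since attack cycles are i.i.d. with $\mathbb{E}[\tau] < +\infty$, the elementary renewal theorem gives that the long-run ratio of elapsed physical time to the number of blocks appearing in the official chain equals $\mathbb{E}[\tau]/\mathbb{E}[N(\tau) \vee N'(\tau)]$. Setting this equal to $\tau_0$ identifies the multiplicative scaling factor $d$ by which all inter-block times must change under the retargeting, namely
$$
d = \frac{\tau_0 \cdot \mathbb{E}[N(\tau) \vee N'(\tau)]}{\mathbb{E}[\tau]} = \frac{1}{\delta} \ .
$$

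The third step is the structural observation that does all the work: altering the difficulty rescales the exponential inter-block times $T_i$ and $T'_i$ uniformly by the factor $d$, while leaving the discrete combinatorics of the attack — the sequence of who finds each block, the fork structure, the attribution of revenues — completely unchanged, because they depend only on the ordering of Poisson events and on $q$ and $\gamma$. Consequently, after the adjustment a cycle has duration distributed as $d \cdot \tau$ and revenue $R$ unchanged in law. Taking expectations yields $\mathbb{E}[\tilde\tau] = d \cdot \mathbb{E}[\tau]$ and $\mathbb{E}[\tilde R] = \mathbb{E}[R]$, hence
$$
\tilde\Gamma = \frac{\mathbb{E}[\tilde R]}{\mathbb{E}[\tilde \tau]} = \frac{1}{d} \cdot \Gamma = \delta \cdot \Gamma \ ,
$$
as claimed.

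The main obstacle is conceptual rather than algebraic: one must justify rigorously that the quantity the protocol equalizes to $\tau_0$ really is $\mathbb{E}[\tau]/\mathbb{E}[N(\tau) \vee N'(\tau)]$, and that it is legitimate to assume the attacker's strategy and the honest miners' hash-rate distribution are held fixed across the retargeting so that only the Poisson intensity is rescaled. This is exactly where the integrability hypothesis and the i.i.d. cycle framework from \cite{GPM2018} are essential, because they convert the per-cycle expectation ratio into the long-run observable rate the difficulty algorithm acts on.
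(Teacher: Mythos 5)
Your proof is correct and takes essentially the same route as the paper's: both reduce the claim to the two facts $\mathbb{E}[\tilde R]=\mathbb{E}[R]$ and $\mathbb{E}[\tilde\tau]=\tau_0\cdot\mathbb{E}[N(\tau)\vee N'(\tau)]$ and then divide. The only difference is that the paper simply asserts these two facts while you supply the renewal-theoretic and time-rescaling justification behind them, which is a welcome elaboration rather than a different argument.
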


\begin{proof}
After a difficulty adjustment, the expected revenue is the same, $\EE[\tilde R]=\EE[R]$, but the expected 
duration of a cycle is $\EE[\tilde \tau]= \tau_0 \cdot \mathbb{E} [N (\tau) \vee
  N' (\tau)]$. So, the new revenue ratio is $\tilde{\Gamma} (\tau) = \Gamma (\tau) \cdot
  \delta $.
\end{proof}

\subsection{Description of Stubborn strategies.}

We describe first the Selfish Mining (SM) strategy.
Let $\Delta \geq 0$ be the advance of the secret fork over the public blockchain. When the honest miners 
validate a block then the selfish miner does the following:

\begin{itemize} 
 \item If $\Delta=0$, he mines normally.
 \item If $\Delta = 1$ then he broadcasts his block. A competition follows.
 \item If $\Delta = 2$ then he broadcasts his secret fork.
 \item If $\Delta \geq 3$ then he broadcasts blocks from his secret fork to match the length of the public blockchain.
 \item Except in the first two cases, he keeps working on top of his secret fork.
\end{itemize}

\medskip

For the Lead-Stubborn Mining (LSM) strategy, with $\Delta \geq 2$ he proceeds as in 
the SM strategy for $\Delta \geq 3$ and with $\Delta =1$ he releases all his secret 
fork and mines normally on top of it.

In other words, a stubborn miner following the Lead Stuborn Mining strategy (LSM) waits until the 
honest miners catch up with him to broadcast all of his secret fork. Then, when this happens, there is a final round. 
Notice that a selfish miner following SM strategy does not take the risk of being caught by the honest
miners. If his advance shrinks to $1$, then he broadcasts his fork.

\medskip

For the Equal Fork Stubborn Mining (EFSM) strategy, everything is equal to LSM, 
but for $\Delta=1$ if he finds a new block
he does not reveal it. 

In other words a stubborn miner following
the Equal Fork Stubborn Mining strategy (EFSM) waits for the official
blockchain to overcome his secret fork by one block. He only gives up  when the
length of the official blockchain equals the length of his secret fork
plus one. In particular, the last round of the attack cycle  of the strategy is always lost by such
a miner. His reward comes only when blocks of
the official blockchain are built by honest miners on top of one of his
blocks. Indeed, the rogue miner never adds new blocks to the official
blockchain. He only tries to replace old blocks mined by the honest miners with
some of his blocks.


\section{Lead-Stubborn Mining strategy}

\subsection{Stopping time.} For each mining strategy we consider the stopping time 
associated with an attack cycle. 
Let $\tau_{LSM}$ be the stopping time of the ``Lead-Stubborn'' Mining strategy. 

\begin{proposition}
  We have 
  $$\tau_{LSM} = \tau + (T_{N (\tau) + 1} \wedge T'_{N (\tau) + 1}) \cdot {\bf{1}}_{T'_1 \leq T_1}
  $$ 
  with 
  $$
  \tau = \inf \{ t \geq T_1 ; N (t) = N' (t) +{\bf{1}}_{T_1 < T'_1} \} \ .
  $$
\end{proposition}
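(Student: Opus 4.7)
The plan is to split into two scenarios according to whether the honest network or the attacker produces the first block of the cycle, and then to verify the claimed identity in each branch.

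Suppose first that $T_1<T'_1$, so the honest network wins the opening race. At time $T_1$ the attacker still has an empty secret fork ($\Delta=0$), so the LSM prescription (identical to SM for $\Delta=0$) instructs him to mine normally, and the cycle closes at $T_1$. To see that the formula agrees, observe that $N(T_1)=1$ and $N'(T_1)=0$ on this event, while the indicator inside $\tau$ equals $1$, so the condition $N(t)=N'(t)+\mathbf{1}_{T_1<T'_1}$ is already satisfied at $t=T_1$; hence $\tau=T_1$ and, because $\mathbf{1}_{T'_1\le T_1}=0$, the second summand vanishes and $\tau_{LSM}=T_1$ as required.

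Suppose next that $T'_1\le T_1$, so the attacker opens a secret fork first. Under LSM the attacker remains stubborn: whenever the honest network finds a block with private lead $\Delta\geq 2$ he only releases enough to match the public height, and only when his lead would drop to $0$ does he publish his whole fork (so "honest catches up"). Translating through $\Delta(t)=N'(t)-N(t)$, the instant of catch-up is the first $t\geq T_1$ with $N(t)=N'(t)$, which in this branch (where the indicator inside $\tau$ vanishes) is precisely $\tau$. Almost sure finiteness of $\tau$ follows from $p>q$. At $\tau$ the attacker has just been overtaken by an honest block (so $\tau$ coincides with the $N(\tau)$-th honest validation), the two chains have equal length, and a final competitive round begins that is resolved by the very next block of either process.

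The waiting time of that final round is then $T_{N(\tau)+1}\wedge T'_{N(\tau)+1}$: the term $T_{N(\tau)+1}$ is a genuine forward inter-arrival time since $\tau$ is an honest-block epoch, whereas for the attacker side one invokes memorylessness of the attacker's Poisson process at the stopping time $\tau$ to identify the residual waiting time until the next attacker block with a fresh exponential of the appropriate rate, re-indexed as $T'_{N(\tau)+1}$ (this is the convention used in \cite{GPM2018}). Multiplying by $\mathbf{1}_{T'_1\le T_1}$ collects the two cases into the single identity stated in the proposition. The main obstacle in the argument is the correct interpretation of the LSM rules, specifically the verification that the attacker's silent phase terminates exactly at the first catch-up $\tau$; once this is secured, the strong Markov property of the two independent Poisson processes handles the final round in a routine way.
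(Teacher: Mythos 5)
Your proof is correct and follows the same two-case decomposition ($T_1<T'_1$ versus $T'_1\le T_1$) that the paper itself uses; the paper offers no formal proof beyond the one-sentence remark after the proposition, so your version is simply a more careful unwinding of the same idea. Your explicit handling of the attacker-side residual waiting time (identifying it with $T'_{N(\tau)+1}$ via memorylessness, noting that $N(\tau)=N'(\tau)$ at catch-up so the index is consistent) is a welcome precision that the paper leaves implicit.
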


  In other words, either $T_1 < T'_1$ and the attack cycle ends at $T_1$ or
  $T'_1 \leq T_1$ and the attack cycle ends up a final round after the
  honest miner catch up with the attacker. 

  We calculate the expected duration time of an attack cycle.
  \begin{lemma} \label{exi}
  We have 
  $$
  \mathbb{E} [\tau] = \frac{p}{p - q} \,  \tau_0
  $$ 
  and 
  $$
  \mathbb{E} [\tau_{LSM}] = \mathbb{E}[\tau] + q \tau_0 
  = \tau_0 + 2 q \mathbb{E} [\tau] = \frac{p + pq - q^2}{p - q} \tau_0 \ .
  $$
  \end{lemma}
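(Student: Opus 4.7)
My plan is to compute each expectation by conditioning on which of the two disjoint events $\{T_1<T'_1\}$ or $\{T'_1\le T_1\}$ occurs; these have probabilities $p$ and $q$ respectively. On $\{T_1<T'_1\}$ the indicator in the definition of $\tau$ equals $1$ and at $t=T_1$ one already has $N(T_1)=1=N'(T_1)+1$, so $\tau=T_1$ there. The memoryless property of independent exponentials then forces $\mathbb{E}[T_1\mid T_1<T'_1]=\mathbb{E}[T_1\wedge T'_1]=\tau_0$.

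On $\{T'_1\le T_1\}$ the indicator is $0$, so $\tau$ is the first time $t\ge T_1$ at which $X(t):=N'(t)-N(t)$ hits $0$. Since no honest block is produced before $T_1$, I note that $X\ge 1$ on $[T'_1,T_1]$, hence the constraint $t\ge T_1$ is automatic and $\tau-T'_1$ is simply the first passage time of $X$ from $1$ down to $0$. I would compute it by applying Doob's Stopping Time Theorem to the martingale $M(t)=X(t)-(q-p)t/\tau_0$, whose integrability hypotheses hold because $q<p$ yields negative drift and exponential tails for the first passage time (the kind of martingale argument systematically used in \cite{GPM2018}). This gives $\mathbb{E}[\tau-T'_1\mid T'_1\le T_1]=\tau_0/(p-q)$, and since $\mathbb{E}[T'_1\mid T'_1\le T_1]=\tau_0$ by memorylessness, assembling the two cases yields
$$
\mathbb{E}[\tau]=p\tau_0+q\left(\tau_0+\frac{\tau_0}{p-q}\right)=\frac{p}{p-q}\,\tau_0.
$$

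For $\mathbb{E}[\tau_{LSM}]$, linearity of expectation reduces matters to evaluating $\mathbb{E}\bigl[(T_{N(\tau)+1}\wedge T'_{N(\tau)+1})\mathbf{1}_{T'_1\le T_1}\bigr]$. The event $\{T'_1\le T_1\}$ is $\mathcal{F}_\tau$-measurable, while the strong Markov property of the two independent Poisson processes ensures that $T_{N(\tau)+1}\wedge T'_{N(\tau)+1}$ is an $\mathrm{Exp}(1/\tau_0)$ variable independent of $\mathcal{F}_\tau$. The resulting product structure delivers that expectation as $q\tau_0$, so $\mathbb{E}[\tau_{LSM}]=\mathbb{E}[\tau]+q\tau_0$, and the two remaining closed forms $\tau_0+2q\mathbb{E}[\tau]$ and $(p+pq-q^2)\tau_0/(p-q)$ then fall out of elementary algebra using $p+q=1$. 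The main delicate point in the whole argument is the integrability check needed to apply Doob's stopping theorem to $M(t)$; everything else is bookkeeping.
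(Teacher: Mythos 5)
Your proof is correct and follows essentially the same route as the paper: condition on which miner wins the first block, use memorylessness to get $\mathbb{E}[T_1\wedge T'_1]=\tau_0$, reduce the catch-up phase to the Poisson first-passage time with mean $\tau_0/(p-q)$ via Doob's theorem (the paper delegates this to Theorem~\ref{poiga}, applying Doob to $N(t)-\alpha t$ and $N'(t)-\alpha' t$ separately and truncating at $\tau\wedge t$, which is exactly the integrability care you flag), and add $q\tau_0$ for the final round. The only cosmetic difference is that you work with the single difference martingale $N'(t)-N(t)-(q-p)t/\tau_0$ rather than the two component martingales.
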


\begin{proof}
  By the strong Markov property (see \cite{R12}), we have
  \begin{align*}
    \mathbb{E} [\tau] & = \mathbb{E} [\tau |T_1 < T'_1] \cdot \mathbb{P}
    [T_1 < T'_1] +\mathbb{E} [\tau |T_1 > T'_1] \cdot \mathbb{P} [T_1 >
    T'_1]\\
    & = \mathbb{E} [T_1 |T_1 < T'_1] \cdot \mathbb{P} [T_1 < T'_1]
    +\mathbb{E} [T'_1 + \tilde{\tau} |T_1 > T'_1] \cdot \mathbb{P} [T_1 >
    T'_1]\\
    & = \mathbb{E} [T_1 \wedge T'_1] +\mathbb{E} [\tilde{\tau}] \cdot q
  \end{align*}
  where $\tilde{\tau} = \inf \{ t ; \tilde{N} (t) = \tilde{N}' (t) + 1
  \}$ with $\tilde{N} (t) = N (t + T'_1) - N (T'_1)$ and $\tilde{N}' (t) = N'
  (t + T'_1) - N' (T'_1)$. Both $\tilde{N}$ and $\tilde{N}'$ are Poisson
  processes with parameters $\alpha$ and $\alpha'$. Thus, from 
  Appendix 2 on Poisson Games, we have 
  $$
  \EE[\tilde \tau]= \frac{\tau_0}{p-q} 
  $$
  and
  $$
    \mathbb{E} [\tau] = \tau_0 + \frac{q}{p - q} \tau_0 = \frac{p}{p - q} \tau_0
  $$
  
  If $T_1 < T'_1$ then $\tau_{LSM} = \tau = T_1$.
  Otherwise (and this event occurs with probability $q$), once the honest
  miners catch-up with the attacker at $\tau$-time, there is a final round. 
\end{proof}

  Note that if $T_1 < T'_1$ then, $\tau_{LSM} = \tau = T_1$ and $N'(\tau) = 0$. 
  Otherwise, $T'_1 \leq T_1$ and $N' (\tau) > 0$.

  \subsection{Revenue Ratio.}

  For $n\geq 0$, we denote by $C_n$ the $n$-th Catalan number
 $$
 C_n = \frac{1}{2n+1} \binom{2n}{n} = \frac{(2 n) !}{n! (n + 1) !}  \ .
 $$ 
 We present in Appendix \ref{appendix_Catalan_Distribution} the combinatorial 
 properties of Catalan numbers used in this article, and the definition of $(p,q)$-Catalan distributions. 
 
The link between Catalan numbers and Bitcoin appears in the following lemma.

\begin{lemma} \label{cat}
The random variable $N'(\tau )$ follows the second type $(p,q)$-Catalan distribution, 
more precisely, we have $\mathbb{P}[N'(\tau)=0]=p$ and for $n \in \mathbb{N}^{\ast}$, 
$$
\mathbb{P} [N' (\tau) = n] = C_{n - 1}  (pq)^n \ .
$$
\end{lemma}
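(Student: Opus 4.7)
The plan is to decompose on the outcome of the first block event in the cycle. With probability $p = \mathbb{P}[T_1 < T'_1]$ the honest network wins the first block; then the indicator in the definition of $\tau$ equals $1$ and $N(T_1) = 1 = N'(T_1) + 1$, so $\tau = T_1$ and $N'(\tau) = 0$. This accounts for the claimed atom $\mathbb{P}[N'(\tau) = 0] = p$. The remaining mass lies on the complementary event $\{T'_1 < T_1\}$, of probability $q$, where the attacker is one block ahead at time $T'_1$ and $\tau$ is the first time after $T'_1$ at which $N(\tau) = N'(\tau)$.

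On this event, I would invoke the standard fact that the superposition $N + N'$ is a Poisson process whose jumps are independently labelled ``attacker'' with probability $q$ and ``honest'' with probability $p$. The event $\{N'(\tau) = n\}$ (for $n \geq 1$) then coincides with the purely discrete event that the chronological sequence of block labels contains exactly $n$ attacker blocks and $n$ honest blocks, starts with an attacker block, and keeps a strictly positive attacker-lead at every intermediate step (otherwise $\tau$ would have occurred earlier). Each such sequence of length $2n$ has probability $p^n q^n = (pq)^n$, so computing $\mathbb{P}[N'(\tau) = n]$ reduces to a pure counting problem.

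This counting is the combinatorial core of the argument. Encoding attacker blocks as up-steps and honest blocks as down-steps, we seek lattice paths from $0$ to $0$ of length $2n$ that begin with an up-step and remain strictly positive at all intermediate times. Stripping off the initial up-step and the terminal down-step gives a bijection with the set of all Dyck paths of length $2(n-1)$, whose cardinality is the Catalan number $C_{n-1}$. Hence $\mathbb{P}[N'(\tau) = n] = C_{n-1}(pq)^n$. As a sanity check, $p + pq\,C(pq) = p + q = 1$, using $\sqrt{1-4pq} = p-q$ since $p+q=1$ and $p \geq q$.

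I expect the only real obstacle to be a clean justification of the reduction from continuous to discrete time, namely that conditionally on $\{T'_1 < T_1\}$ and on the value of $T'_1$, the subsequent sequence of block labels is i.i.d. Bernoulli$(q)$ and independent of $T'_1$. This is a direct consequence of the strong Markov property of the pair $(N,N')$, exactly as used in Lemma~\ref{exi}; once it is in hand, the Catalan identification is the classical ballot / first-return-to-origin decomposition.
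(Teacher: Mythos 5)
Your proof is correct and follows essentially the same route as the paper's: both reduce $\{N'(\tau)=n\}$ to counting label sequences of $n$ attacker and $n$ honest arrivals, each of probability $(pq)^n$ by the Bernoulli$(p,q)$ marking of the superposed Poisson process, with the count $C_{n-1}$ coming from the strictly-positive-excursion (first-return) enumeration. The only difference is cosmetic: you encode the sequences as one-dimensional walks and strip the first and last steps, while the paper counts the equivalent two-dimensional lattice paths strictly above the diagonal via its Proposition \ref{geometric_property}.
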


  \begin{proof} For $n=0$ we have $\mathbb P [N'(\tau)=0]=p$. Consider $n\geq 1$.
  The event $\{ N' (\tau) = n \}$ is the disjoint union of
  sub-events of the form $\{ \Sigma_1 < \ldots < \Sigma_{2 n + 1} \}$ where 
  for each $\Sigma_i$ there is $j$ such that 
  $\Sigma_i \in \{ S_j, S'_j \}$, and  $\Sigma_{2 n + 1} =
  S'_{n + 1}$. The sequence of points with coordinates $(N (\Sigma_i), N'
  (\Sigma_i))$ form a path starting at $(0, 0)$ and ending at $(n, n)$ which
  stays strictly above the first bisector $\{x = y\}$ in the Euclidean plane. For
  example, with $n = 3$,
  $$
  \{ N' (\tau) = 3 \} = \{ S'_1 < S'_2 < S'_3 < S_1 < S_2 < S_3 < S'_4 \}
     \cup \{ S'_1 < S'_2 < S_1 < S'_3 < S_2 < S_3 < S'_4 \} 
  $$
  The number of such paths is $C_{n - 1}$ (see Proposition \ref{geometric_property} in 
  Appendix \ref{appendix_Catalan_Distribution}). Moreover, the number of $S_j$
  (resp. $S'_j$) in the sequence of $(\Sigma_k)_{1 \leq k \leq 2 n}$
  is equal to $n$.
  \end{proof}

  From the expected value computation of a second type $(p,q)$-Catalan random variable 
  (Proposition \ref{catalan_expected_value} in Appendix \ref{appendix_Catalan_Distribution}) we get:
  
  \begin{corollary}\label{dercor}
  We have 
  $$
  \mathbb{E} [N'(\tau)] = \frac{p q}{p-q} \ .
  $$
  \end{corollary}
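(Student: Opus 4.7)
The plan is to read off the distribution of $N'(\tau)$ from Lemma \ref{cat} and then evaluate the resulting series using the generating function $C(x)$ introduced in the Introduction. Since $\mathbb{P}[N'(\tau)=0]=p$ contributes nothing to the expectation, only the tail $n\geq 1$ matters, and Lemma \ref{cat} gives
$$
\mathbb{E}[N'(\tau)] \;=\; \sum_{n=1}^{\infty} n\, C_{n-1}\,(pq)^n \;=\; pq\sum_{m=0}^{\infty}(m+1)\,C_m\,(pq)^m .
$$
Formally this is exactly the statement of Proposition \ref{catalan_expected_value} applied to a second type $(p,q)$-Catalan random variable, so the cleanest write-up is to quote that proposition directly; I outline the underlying computation below for completeness.

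The generating-function step is short. Since $C(x) = \frac{1-\sqrt{1-4x}}{2x}$, we have $xC(x)=\frac{1-\sqrt{1-4x}}{2}$, and differentiating gives
$$
\frac{d}{dx}\bigl(xC(x)\bigr) \;=\; \sum_{m=0}^{\infty}(m+1)\,C_m\, x^m \;=\; \frac{1}{\sqrt{1-4x}} .
$$
Evaluating at $x=pq$ and using $1-4pq = (p+q)^2 - 4pq = (p-q)^2$, together with $p>q$ (so $\sqrt{(p-q)^2}=p-q$), we obtain
$$
\mathbb{E}[N'(\tau)] \;=\; \frac{pq}{\sqrt{1-4pq}} \;=\; \frac{pq}{p-q},
$$
which is the desired identity.

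The only potential subtlety is justifying absolute convergence of the series so that termwise differentiation is legal; this is immediate since $pq \leq 1/4$ with strict inequality whenever $p\neq q$, placing $pq$ strictly inside the disk of convergence of $C(x)$. In the boundary case $p=q=1/2$ the formula degenerates (both expected value and $p-q$ become infinite/zero), consistent with the attacker no longer being a minority miner; this case is implicitly excluded by the profitability framework. With these remarks, the corollary follows by a one-line citation of Proposition \ref{catalan_expected_value}, and I would write it exactly that way in the paper.
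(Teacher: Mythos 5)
Your proof is correct and follows exactly the paper's route: the paper likewise derives the corollary by citing Proposition \ref{catalan_expected_value} for the second type $(p,q)$-Catalan distribution identified in Lemma \ref{cat}, and the generating-function computation you outline (differentiating $xC(x)$ and evaluating at $pq$ using $1-4pq=(p-q)^2$) is precisely the appendix's proof of that proposition. The convergence remark is a harmless bonus; no changes needed.
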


\begin{remark}
An alternative probabilistic proof follows from Doob's Theorem: We have 
$\mathbb{E} [N'(\tau \wedge t)] = \alpha' \mathbb{E} [\tau \wedge t]$ for $t>0$ and we let 
$t\rightarrow +\infty$ (similar to Theorem \ref{poiga} in Appendix \ref{appendix_Poisson_games}).
\end{remark}

Next we compute the expected revenue per attack cycle.

\begin{proposition} \label{erxi} Let $R_{LSM}$ be the revenue over an attack cycle.
We have 
$$
\mathbb{E} [R_{LSM}] = \left( \frac{p}{p- q} + q \right) qb - f (\gamma) b
$$ 
with
$$
  f (\gamma) = \frac{pq (1 - \gamma)}{\gamma} \cdot (1 - p (1 - \gamma) C ((1 - \gamma) pq))  \ .
$$
\end{proposition}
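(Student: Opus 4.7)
The plan is to condition on whether $T_1 < T'_1$ (probability $p$) or $T'_1 \leq T_1$ (probability $q$). In the first case the attack cycle terminates at $\tau = T_1$ with no attacker block in the official chain, so the contribution to $\mathbb{E}[R_{LSM}]$ is zero. All of the expected revenue therefore arises from the second case, reducing the task to the computation of $q \cdot \mathbb{E}[R_{LSM} \mid T'_1 \leq T_1]$.

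In the case $T'_1 \leq T_1$, I would condition further on $N'(\tau) = n \geq 1$, whose law is the second-type $(p,q)$-Catalan distribution by Lemma \ref{cat}. At time $\tau$ the attacker has fully revealed his fork (triggered by the LSM rule at $\Delta = 1$), and a single final round of duration $T_{n+1} \wedge T'_{n+1}$ follows. Three outcomes compete in that round: the attacker mines next (conditional probability $q$, giving revenue $(n+1)b$), a $\gamma$-honest miner extends the attacker's fork (conditional probability $p\gamma$, giving revenue $nb$), or a $(1-\gamma)$-honest miner extends the public chain. The last case is where the analysis becomes subtle: whether the attacker loses his fork depends on the number of $\gamma$-honest blocks that have accumulated on the revealed portion of the fork during the cycle (courtesy of the LSM rule for $\Delta \geq 2$, under which the attacker publishes blocks to match the public chain length well before $\tau$).

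Summing $\mathbb{P}[N'(\tau)=n]\cdot\mathbb{E}[R_{LSM}\mid N'(\tau)=n]$ over $n\geq 1$ then reduces to manipulations of the Catalan generating series $\sum_{n\geq 1} C_{n-1}x^{n}=x\,C(x)$ evaluated at $x=(1-\gamma)pq$. Using $\sum_{n\geq 1} C_{n-1}(pq)^{n}=q$ together with $\mathbb{E}[N'(\tau)]=\frac{pq}{p-q}$ from Corollary \ref{dercor}, the algebra collapses to the stated expression $\mathbb{E}[R_{LSM}] = \left(\frac{p}{p-q}+q\right)qb - f(\gamma)b$, with $f(\gamma)$ emerging as the correction term $\frac{pq(1-\gamma)}{\gamma}(1-p(1-\gamma)C((1-\gamma)pq))$.

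The main obstacle is the precise combinatorial bookkeeping of which honest blocks end up in the official chain. The key subtlety is that once the attacker begins publishing blocks to match length, the $\gamma$-honest miners redirect their work onto the attacker's visible fork, and this contribution must be properly tracked when assessing whether the attacker's fork ultimately defeats the public chain in the final round. This effect induces a secondary Poisson race between the attacker (rate $q$) and the $(1-\gamma)$-honest miners (rate $(1-\gamma)p$), and it is exactly this secondary race, whose Catalan-type generating function is evaluated at $(1-\gamma)pq$, that produces the factor $C((1-\gamma)pq)$ in $f(\gamma)$.
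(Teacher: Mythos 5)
Your skeleton matches the paper's own proof: zero revenue on the event $T_1<T_1'$, conditioning on $N'(\tau)=n$ via the second-type Catalan distribution of Lemma \ref{cat}, the three-way split of the final round with probabilities $q$, $\gamma p$, $(1-\gamma)p$ and revenues $(n+1)b$, $nb$, and ``less than $nb$'', and a final resummation against the Catalan generating series. But there is a genuine gap precisely where you defer ``the main obstacle'': you never compute $\mathbb{E}[R_{LSM}\mid N'(\tau)=n,\ R_{LSM}<nb]$, and this conditional expectation is the source of the entire $\gamma$-dependence of $f(\gamma)$. The paper's resolution is that in the losing case the retained revenue is $b$ times the \emph{position of the last} honest block mined on top of the attacker's revealed fork --- not the \emph{number} of such blocks, as your write-up suggests --- among the $n-1$ honest blocks after the first one (which sits on the common root and is not subject to the $\gamma$-redirection). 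Applying the last-success computation of Lemma \ref{col} to $n-1$ independent Bernoulli($\gamma$) trials gives $\mathbb{E}[R_{LSM}\mid N'(\tau)=n,\ R_{LSM}<nb]=\bigl(n-\frac{1-(1-\gamma)^n}{\gamma}\bigr)b$. Without this identity the asserted ``collapse'' of $\sum_{n\geq 1}\mathbb{P}[N'(\tau)=n]\,\mathbb{E}[R_{LSM}\mid N'(\tau)=n]$ to the stated closed form cannot be verified.

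Your closing account of where $C((1-\gamma)pq)$ comes from is also not the actual mechanism: there is no secondary Poisson race in the argument. The factor arises algebraically because the term $\frac{(1-\gamma)^{n+1}}{\gamma}\,p$ produced by the last-success expectation multiplies the Catalan weight $C_{n-1}(pq)^n$, and $\sum_{n\geq 1}C_{n-1}\bigl((1-\gamma)pq\bigr)^{n-1}=C((1-\gamma)pq)$, yielding the contribution $\frac{p^2q(1-\gamma)^2}{\gamma}C((1-\gamma)pq)$. The shifted argument $(1-\gamma)pq$ is thus a bookkeeping artifact of pairing the coin-tossing lemma with the Catalan distribution of $N'(\tau)$, not the generating function of a new race between the attacker and the $(1-\gamma)$-honest miners.
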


\begin{proof}
  Note that for $n>0$,
  \begin{eqnarray*}
    \mathbb{P} [R_{LSM} = nb|N' (\tau) = n] & = & \gamma p\\
    \mathbb{P} [R_{LSM} = (n + 1) b|N' (\tau) = n] & = & q\\
    \mathbb{P} [R_{LSM} < nb|N' (\tau) = n] & = & (1 - \gamma) p
  \end{eqnarray*}
  Moreover, during an attack cycle, each time the honest miners find a block 
  (except for the first block which is mined on a common root), there is a probability $\gamma$ 
  that it is found by a miner mining on top of
  the attacker's fork. If $N'(\tau) = n$ and $R(\tau_{LSM}) < n b$, this can happen at most $n-1$ times 
  over an attack cycle. So, by Lemma \ref{col} from Appendix \ref{appendix_biased_coin_tossing},
  \begin{eqnarray*}
    \mathbb{E} [R_{LSM} | (N' (\tau) = n) \wedge (R_{LSM} < nb)] &
    = & \left( n - \frac{1 - (1 - \gamma)^n}{\gamma} \right) b
  \end{eqnarray*}
  Therefore, by conditioning on $\tau$ and using Lemma \ref{exi}, Lemma \ref{cat}
  and Corollary \ref{dercor}, we get 
  \begin{align*}
    &\frac{\mathbb{E} [R_{LSM}]}{b} = \sum_{n > 0}
    \mathbb{E} \left[ \frac{R_{LSM}}{b} \middle |N' (\tau) = n \right]
    \cdot \mathbb{P} [N' (\tau) = n]\\
    & = \sum_{n > 0} \left( \left( n - \frac{1 - (1 - \gamma)^n}{\gamma}
    \right) \cdot (1 - \gamma) p + n \gamma p + (n + 1) q \right) \cdot
    \mathbb{P} [N' (\tau) = n]\\
    & = \sum_{n > 0} \left( n + 1 - \frac{p}{\gamma} + \frac{(1 -
    \gamma)^{n + 1}}{\gamma} p \right) \cdot \mathbb{P} [N' (\tau) = n]\\
    & = \mathbb{E} [N' (\tau)] + \left( 1 - \frac{p}{\gamma} \right) 
    (1 -\mathbb{P} [N' (\tau) = 0]) + \frac{p^2 q (1 - \gamma)^2}{\gamma} 
    \sum_{n > 0} C_{n - 1}  ((1 - \gamma) pq)^{n - 1}\\
    & = \frac{pq}{p - q} + \left( 1 - \frac{p}{\gamma} \right) q +
    \frac{p^2 q (1 - \gamma)^2}{\gamma} C ((1 - \gamma) pq)\\
    & = \left( \frac{p}{p - q} + q \right) q - \frac{pq}{\gamma} \cdot (1 -
    \gamma - p (1 - \gamma)^2 C ((1 - \gamma) pq))
  \end{align*}
  
\end{proof}

\begin{theorem}
  The revenue ratio of the Lead Stubborn strategy is
  $$
  \Gamma (LSM) = \left( q - \frac{pq (p - q)  (1 - \gamma)}{\gamma}
     \cdot \frac{1 - p (1 - \gamma) C ((1 - \gamma) pq)}{p + q (p - q)}
     \right)  \frac{b}{\tau_0}  \ .
  $$  
\end{theorem}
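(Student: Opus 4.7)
The plan is to compute $\Gamma(LSM)$ directly from its definition as the ratio $\mathbb{E}[R_{LSM}]/\mathbb{E}[\tau_{LSM}]$, using Lemma \ref{exi} for the denominator and Proposition \ref{erxi} for the numerator. All the probabilistic content is already encoded in those two results, so what remains is purely algebraic manipulation.

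First I would write
$$
\frac{\mathbb{E}[R_{LSM}]}{b} = \left(\frac{p}{p-q} + q\right) q - f(\gamma), \qquad \frac{\mathbb{E}[\tau_{LSM}]}{\tau_0} = \frac{p+pq-q^2}{p-q},
$$
and then observe the key algebraic coincidence that makes the final formula clean: the factor $\frac{p}{p-q}+q$ appearing in $\mathbb{E}[R_{LSM}]$ equals $\frac{p+pq-q^2}{p-q}$, which is exactly $\mathbb{E}[\tau_{LSM}]/\tau_0$. This means that upon dividing, the ``clean'' part of the revenue (the part not involving $f(\gamma)$) reduces simply to $q$.

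Explicitly, I would compute
$$
\Gamma(LSM) = \frac{b}{\tau_0} \cdot \frac{\bigl(\tfrac{p}{p-q}+q\bigr) q - f(\gamma)}{\tfrac{p+pq-q^2}{p-q}} = \frac{b}{\tau_0}\left(q - \frac{(p-q)\,f(\gamma)}{p+pq-q^2}\right),
$$
and then substitute the explicit formula
$$
f(\gamma) = \frac{pq(1-\gamma)}{\gamma}\bigl(1 - p(1-\gamma)C((1-\gamma)pq)\bigr)
$$
from Proposition \ref{erxi}. Since the theorem statement writes $p+q(p-q)$ for what I have called $p+pq-q^2$, these two expressions are identical, and the stated closed form follows immediately.

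There is no real obstacle here; the substantive work has been done in Proposition \ref{erxi} (the Catalan-number identification of the distribution of $N'(\tau)$ and the computation of the expected revenue) and in Lemma \ref{exi} (the expected cycle length). The only thing to notice is the fortunate factorization $pq+pq^2-q^3 = q(p+pq-q^2)$, which produces the clean leading term $q \cdot b/\tau_0$ in the final expression for the revenue ratio.
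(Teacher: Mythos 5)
Your proposal is correct and matches the paper's proof, which likewise just divides the expected revenue from Proposition \ref{erxi} by the expected duration from Lemma \ref{exi}; you merely spell out the algebra (including the identity $\frac{p}{p-q}+q=\frac{p+q(p-q)}{p-q}$) that the paper leaves implicit.
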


\begin{proof}
  We have $\Gamma (LSM) = \mathbb{E} [R_{LSM}]/\mathbb{E} [\tau_{LSM}]$. 
  Use Lemma \ref{exi} and Proposition \ref{erxi}.  
\end{proof}

\subsection{Difficulty adjustment.} We compute now the revenue ratio and the apparent hashrate after 
a difficulty adjustment.

\begin{lemma}\label{nle}
  We have 
  $\mathbb{E} [N(\tau_{LSM}) \vee N'(\tau_{LSM})] = \frac{\mathbb{E}[\tau_{LSM}]}{2 \tau_0} + \frac{1}{2}$.
\end{lemma}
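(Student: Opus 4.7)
The plan is to use the elementary algebraic identity
$$
N(\tau_{LSM}) \vee N'(\tau_{LSM}) = \frac{1}{2}\bigl( N(\tau_{LSM}) + N'(\tau_{LSM}) + |N(\tau_{LSM}) - N'(\tau_{LSM})| \bigr),
$$
and then to evaluate the expectation of each of the two pieces separately. The virtue of this decomposition is that the absolute difference will turn out to be deterministic, while the expectation of the sum follows from a single application of Doob's Optional Stopping Theorem.

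First I would check that $|N(\tau_{LSM}) - N'(\tau_{LSM})| = 1$ almost surely by a case analysis matching the two branches used in the proof of Lemma~\ref{exi}. On the event $\{T_1 < T'_1\}$ one has $\tau_{LSM} = \tau = T_1$, with $(N(T_1), N'(T_1)) = (1,0)$, so the difference equals $1$. On the complementary event $\{T'_1 \leq T_1\}$, the honest miners catch up at time $\tau$ with $N(\tau) = N'(\tau)$, and the final round $T_{N(\tau)+1} \wedge T'_{N(\tau)+1}$ produces exactly one additional block on one side, so again the difference at $\tau_{LSM}$ equals $1$. Taking expectations gives $\mathbb{E}[|N(\tau_{LSM}) - N'(\tau_{LSM})|] = 1$.

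Next, since the honest miners and the attacker mine according to independent Poisson processes with rates $\alpha$ and $\alpha'$ satisfying $\alpha + \alpha' = 1/\tau_0$, the sum $N(t) + N'(t)$ is a Poisson process of rate $1/\tau_0$, and consequently the compensated process $M_t = N(t) + N'(t) - t/\tau_0$ is a martingale. Applying Doob's Optional Stopping Theorem to $M_t$ at the integrable stopping time $\tau_{LSM}$ (integrability is guaranteed by Lemma~\ref{exi}) yields
$$
\mathbb{E}[N(\tau_{LSM}) + N'(\tau_{LSM})] = \frac{\mathbb{E}[\tau_{LSM}]}{\tau_0}.
$$
Plugging both evaluations into the decomposition gives the desired formula. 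The only point that requires any care is justifying the optional stopping step for the unbounded stopping time $\tau_{LSM}$, but this follows by the standard truncation argument at $\tau_{LSM} \wedge t$ together with $\mathbb{E}[\tau_{LSM}] < \infty$, exactly as in Theorem~\ref{poiga} of Appendix~\ref{appendix_Poisson_games}.
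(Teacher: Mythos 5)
Your proof is correct and follows essentially the same route as the paper: both rest on the identity $N \vee N' = \tfrac{1}{2}(N + N' + |N - N'|)$, the observation that $|N(\tau_{LSM}) - N'(\tau_{LSM})| = 1$ almost surely at the end of a cycle, and optional stopping for the compensated Poisson martingales via truncation at $\tau_{LSM}\wedge t$ and monotone convergence. The only cosmetic difference is that you apply Doob's theorem once to the summed process $N+N'$ (rate $1/\tau_0$) rather than separately to $N$ and $N'$, which yields the same conclusion.
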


\begin{proof}
  For all $t > 0$, $\tau_{LSM} \wedge t$ is a bounded stopping time. So,
  by proceeding as in Appendix 2, Doob's theorem yields $\mathbb{E} [N
  (\tau_{LSM} \wedge t)] = \alpha \mathbb{E} [\tau_{LSM} \wedge
  t]$ and $\mathbb{E} [N'(\tau_{LSM} \wedge t)] = \alpha' \mathbb{E}
  [\tau_{LSM} \wedge t]$. Taking limits when $t \rightarrow 0$, the
  monotone convergence theorem yields $\mathbb{E} [N (\tau_{LSM})] =
  \alpha \mathbb{E} [\tau_{LSM}]$ 
  and $\mathbb{E}[N'(\tau_{LSM})] = \alpha' \mathbb{E} [\tau_{LSM}]$. Now, we
  observe that at the end of an attack cycle, we have necessarily $| N
  (\tau_{LSM}) - N' (\tau_{LSM}) | = 1$. So, $N (\tau_{LSM})
  \vee N' (\tau_{LSM}) = \frac{N (\tau_{LSM}) + N'
  (\tau_{LSM}) + 1}{2}$. Hence we get the result by taking expected values
  on both sides of the last equality.
\end{proof}

The following proposition is now a consequence of Theorem \ref{rap}.

\begin{proposition}\label{dadjlsm}
  The parameter $\delta_{LSM}$ updating the difficulty is 
 $$
    \delta_{LSM} = \frac{p + pq - q^2}{p + pq - q} > 1 \ .
 $$
\end{proposition}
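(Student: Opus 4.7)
The plan is to invoke Theorem \ref{rap} and substitute the two quantities whose expectations are already in hand. By definition, $\delta_{LSM} = \mathbb{E}[\tau_{LSM}] / (\tau_0 \cdot \mathbb{E}[N(\tau_{LSM}) \vee N'(\tau_{LSM})])$, so the proposition is really a bookkeeping step combining Lemma \ref{exi} and Lemma \ref{nle}.

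First I would recall from Lemma \ref{exi} that $\mathbb{E}[\tau_{LSM}] = \frac{p + pq - q^2}{p - q}\,\tau_0$. Next I would plug this into the formula of Lemma \ref{nle} to obtain
$$
\mathbb{E}[N(\tau_{LSM}) \vee N'(\tau_{LSM})] = \frac{1}{2}\left(\frac{p + pq - q^2}{p - q} + 1\right) = \frac{(p + pq - q^2) + (p - q)}{2(p - q)}.
$$
Forming the ratio, the factor $p - q$ cancels and a factor $2$ cancels from numerator and denominator, leaving
$$
\delta_{LSM} = \frac{p + pq - q^2}{(p + pq - q^2) + (p - q)}.
$$
It then only remains to verify that $(p + pq - q^2) + (p - q)$ simplifies to $p + pq - q$, which is immediate using $p + q = 1$ (both expressions equal $1 - q - q^2$).

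For the strict inequality $\delta_{LSM} > 1$, I would simply compare numerator and denominator: their difference is $(p + pq - q^2) - (p + pq - q) = q - q^2 = pq > 0$, which is positive for $q \in (0,1)$, and the denominator $p + pq - q = 1 - q - q^2$ is positive throughout the honest-majority regime $q < 1/2$ where the analysis is meaningful. There is no real obstacle here; the only thing to be careful about is the algebraic simplification of the denominator, which is why I would present it in two equivalent forms (one using $p + q = 1$, one keeping $p$ and $q$ formal) so the reader can see the cancellation both ways.
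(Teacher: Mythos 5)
Your overall route is exactly the paper's: apply Theorem \ref{rap}, substitute $\mathbb{E}[\tau_{LSM}]$ from Lemma \ref{exi} into Lemma \ref{nle}, and simplify. The final formula and the argument for $\delta_{LSM}>1$ are correct. However, as written your derivation contains two compensating factor-of-$2$ errors, so both of your intermediate assertions are false even though they cancel out. First, after the $p-q$ cancels, the ratio is
$$
\delta_{LSM}=\frac{\dfrac{p+pq-q^2}{p-q}}{\dfrac{(p+pq-q^2)+(p-q)}{2(p-q)}}=\frac{2\left(p+pq-q^2\right)}{(p+pq-q^2)+(p-q)}\ ;
$$
there is no factor of $2$ in the numerator for the denominator's $2$ to cancel against, so your displayed expression is off by a factor of $2$. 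Second, using $p+q=1$ one finds $p+pq-q^2=1-2q^2$ and $p-q=1-2q$, hence $(p+pq-q^2)+(p-q)=2-2q-2q^2=2\,(1-q-q^2)=2\,(p+pq-q)$, not $p+pq-q$; your claim that both expressions equal $1-q-q^2$ is wrong by the same factor of $2$. The two slips cancel and you land on the correct $\delta_{LSM}=\frac{p+pq-q^2}{p+pq-q}$, but the proof is only valid once both steps are corrected (or, as the paper does, once one works with $\frac{\mathbb{E}[\tau]}{\tau_0}+q$ over $1+q\frac{\mathbb{E}[\tau]}{\tau_0}$, which avoids the issue). The verification of strict inequality via $(p+pq-q^2)-(p+pq-q)=q-q^2=pq>0$ is fine.
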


\begin{proof}
  Using  \ Theorem \ref{rap}, Lemma \ref{exi} and Lemma \ref{nle}, we have
  \[ \delta_{LSM} = \frac{\frac{\mathbb{E}
     [\tau_{LSM}]}{\tau_0}}{\frac{1}{2}  \left( \frac{\mathbb{E}
     [\tau_{LSM}]}{\tau_0} + 1 \right)} = \frac{\frac{\mathbb{E}
     [\tau]}{\tau_0} + q}{1 + q \frac{\mathbb{E} [\tau]}{\tau_0}} =
     \frac{\frac{p}{p - q} + q}{1 + \frac{pq}{p - q}} = \frac{p + pq - q^2}{p
     + pq - q} > 1 \]
\end{proof}

\subsection{Apparent hashrate after a difficulty adjustment}

From Proposition \ref{dadjlsm}, we can deduce the hashrate of the strategy on the long term.
 \begin{corollary}
  After a difficulty adjustment, the apparent hashrate
  $\tilde{q}_{{LSM}}$ is 
  $$
    \tilde{q}_{{LSM}}  =  q \cdot \frac{p + pq - q^2}{p + pq -
    q} - \frac{pq (p - q)  (1 - \gamma)}{\gamma} \cdot \frac{1 - p (1 -
    \gamma) C ((1 - \gamma) pq)}{p + pq - q}
  $$
\end{corollary}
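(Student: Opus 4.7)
The plan is essentially bookkeeping: the corollary is just the product formula $\tilde q = \tilde \Gamma \cdot \tau_0/b$ combined with $\tilde\Gamma = \Gamma\cdot\delta$ from Theorem \ref{rap}, applied to the two closed forms that have already been established.

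First I would recall the definition of apparent hashrate given in the introduction, namely $\tilde q_{LSM} = \tilde\Gamma(LSM)\cdot\tau_0/b$. Then, invoking Theorem \ref{rap}, I rewrite this as
\[
\tilde q_{LSM} \;=\; \Gamma(LSM)\cdot \delta_{LSM}\cdot \frac{\tau_0}{b}.
\]
At this point I substitute the closed form for $\Gamma(LSM)$ from the previous theorem together with $\delta_{LSM}=(p+pq-q^2)/(p+pq-q)$ from Proposition \ref{dadjlsm}.

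The only observation that needs to be made is that the denominator in $\Gamma(LSM)$ is $p + q(p-q) = p + pq - q^2$, which is precisely the numerator of $\delta_{LSM}$. Distributing $\delta_{LSM}$ over the two summands of $\Gamma(LSM)$, the first summand becomes $q\cdot(p+pq-q^2)/(p+pq-q)$, and in the second summand the factor $p+pq-q^2$ in the denominator of $\Gamma(LSM)$ cancels against the numerator of $\delta_{LSM}$, leaving the single denominator $p+pq-q$. This yields the stated formula for $\tilde q_{LSM}$ verbatim.

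There is no real obstacle here: no new probabilistic input is required, and the algebra is a one-line cancellation. The corollary is formally a consequence of the two preceding results of this subsection, and the proof reduces to checking the identity $p+q(p-q)=p+pq-q^2$ and performing the cancellation above.
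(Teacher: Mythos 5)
Your proposal is correct and matches the paper's own proof: both apply Theorem \ref{rap} to write $\tilde\Gamma(LSM)=\Gamma(LSM)\cdot\delta_{LSM}$, substitute the closed forms from the preceding theorem and Proposition \ref{dadjlsm}, and use the cancellation $p+q(p-q)=p+pq-q^2$ against the numerator of $\delta_{LSM}$. Nothing is missing.
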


\begin{proof}
  By Theorem \ref{rap}, we have:
  \begin{align*}
    &\tilde{\Gamma} (LSM) = \Gamma (LSM) \delta_{{LSM}}\\
    & = \left( q - \frac{pq (p - q)  (1 - \gamma)}{\gamma} \cdot \frac{1 -
    p (1 - \gamma) C ((1 - \gamma) pq)}{p + q (p - q)} \right) 
    \frac{b}{\tau_0} \cdot \frac{p + pq - q^2}{p + pq - q}\\
    & = \left( q \cdot \frac{p + pq - q^2}{p + pq - q} - \frac{pq (p - q) 
    (1 - \gamma)}{\gamma} \cdot \frac{1 - p (1 - \gamma) C ((1 - \gamma)
    pq)}{p + pq - q} \right)  \frac{b}{\tau_0}
  \end{align*}
  
\end{proof}

\section{Equal Fork Stubborn Mining strategy}

\subsection{Stopping time.} Let $\tau_{EFSM}$ be the stopping time of an attack 
cycle for the Equal Fork Stubborn Mining strategy. 

\begin{proposition}
  We have  
  $$
  \tau_{EFSM} = \inf \{ t \geq 0 ; N (t) = N' (t) + 1 \} \ .
  $$
\end{proposition}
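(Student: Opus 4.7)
The plan is to unpack the EFSM description from the introduction and translate its stopping rule directly into the Poisson process language. Verbally, the EFSM attacker mines every new block on his secret fork, never publishes a block that would shrink his lead, and abandons the cycle at the first moment the honest (official) chain is exactly one block longer than his secret fork.

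First I would verify that throughout a cycle $N(t)$ equals the length of the official chain, measured from the common root at time $0$, and $N'(t)$ equals the length of the attacker's secret fork. Since the EFSM attacker contributes none of his own blocks to the public chain during the cycle, every block counted by $N$ is an honest block that extends only the official chain, so that chain has length $N(t)$; similarly every attacker block is placed on the secret fork, giving it length $N'(t)$. Intermediate releases of the secret fork (for example the SM-style broadcast at $\Delta=2$ that triggers a competition) do not alter the counts $N$ and $N'$, which only record totals of blocks mined since the start of the cycle.

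The EFSM rule --- \emph{``he only gives up when the length of the official blockchain equals the length of his secret fork plus one''} --- then translates directly into the stopping criterion $N(t) = N'(t)+1$, yielding $\tau_{EFSM} = \inf\{t \geq 0 : N(t) = N'(t) + 1\}$. The two boundary cases are consistent with this. If $T_1 < T'_1$, then at $t = T_1$ one has $N=1$ and $N'=0$, so the cycle ends at $T_1$, matching the attacker's immediate abandonment when the first block is honest. If $T'_1 < T_1$, then at $t = T'_1$ we have $N=0$ and $N'=1$, the criterion is not yet met, and the attacker continues to mine stubbornly on his secret fork.

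There is no real computational obstacle here; the proposition is essentially a definitional translation. The only subtle point worth recording in the proof is the identification of $N(t)$ and $N'(t)$ with the two chain lengths being compared in the EFSM rule, which in particular holds regardless of any intermediate reveal of the secret fork.
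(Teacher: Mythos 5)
Your proposal is correct and follows essentially the same route as the paper, which likewise treats the statement as a direct translation of the EFSM stopping rule (give up when the public chain exceeds the secret fork by one block) and checks the two cases $T_1 < T_1'$ and $T_1' \leq T_1$. Your version merely spells out in more detail the identification of $N(t)$ and $N'(t)$ with the two chain lengths, which the paper leaves implicit.
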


\begin{proof}
If $T_1 < T'_1$, then we have $\tau_{EFSM} = \tau = T_1$. 
Otherwise, we wait for the honest miners to catch up with the stubborn miner 
and win the last round. 
\end{proof}

From Theorem \ref{poiga} in Appendix \ref{appendix_Poisson_games} we get:
\begin{lemma} \label{pgame}
We have 
$$
\mathbb{E} [\tau_{EFSM}] = \frac{\tau_0}{p - q} \ .
$$
\end{lemma}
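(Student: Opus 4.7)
The plan is to reduce the lemma to a direct invocation of the Poisson game result (Theorem \ref{poiga}) from the appendix. By the proposition immediately above, $\tau_{EFSM} = \inf\{t \geq 0 : N(t) = N'(t) + 1\}$, so this is precisely the first-passage time to level $+1$ of the difference of two independent Poisson processes with rates $\alpha$ and $\alpha'$ (equivalently, $p/\tau_0$ and $q/\tau_0$). Because the drift $\alpha - \alpha' = (p - q)/\tau_0$ is strictly positive, this falls inside the integrable regime treated in Theorem \ref{poiga}, and the formula it yields is exactly $\tau_0/(p - q)$.

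Were I to derive the formula from scratch rather than cite the appendix, the natural route is the compensated-martingale argument. The process $M(t) = N(t) - N'(t) - (\alpha - \alpha')t$ is a martingale (independent sums of compensated Poisson processes). Applying Doob's optional stopping theorem to the bounded stopping time $\tau_{EFSM} \wedge t$, one gets
$$\EE[N(\tau_{EFSM}\wedge t) - N'(\tau_{EFSM}\wedge t)] = (\alpha - \alpha')\,\EE[\tau_{EFSM}\wedge t].$$
Passing to the limit $t \to \infty$: monotone convergence controls the right-hand side, while on the left one uses $N(\tau_{EFSM}) - N'(\tau_{EFSM}) = 1$ by construction together with a uniform integrability argument. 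The identity $1 = (\alpha - \alpha')\,\EE[\tau_{EFSM}]$ then rearranges to the claim.

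The only nontrivial step is justifying the limit exchange on the left, which requires integrability control on the jumps of $N$ and $N'$ up to $\tau_{EFSM}$. This in turn rests on $\EE[\tau_{EFSM}] < +\infty$, itself a consequence of the strict positive drift $p > q$; once finiteness is granted, dominated convergence applies. All of this bookkeeping is what the appendix on Poisson games handles once and for all, which is precisely why the proof of the lemma in the main text is a one-line appeal to Theorem \ref{poiga}.
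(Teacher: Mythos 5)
Your proposal is correct and follows exactly the paper's route: the paper proves this lemma by a direct appeal to Theorem \ref{poiga} applied to $\tau_{EFSM} = \inf\{t \geq 0 : N(t) = N'(t)+1\}$ with $\alpha - \alpha' = (p-q)/\tau_0 > 0$. Your supplementary sketch of the compensated-martingale/optional-stopping argument is essentially the content of the appendix's own proof of Theorem \ref{poiga}, so nothing is missing.
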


\subsection{Revenue ratio.}
  We denote by $R_{EFSM}$ the revenue of the stubborn miner after an attack cycle.
  \begin{lemma}
  The random variable $N'(\tau_{EFSM})$ is a $(p,q)$-Catalan distribution, i.e. 
  for $n \geq 0$, we have 
  $$
  \mathbb{P} [N' (\tau_{EFSM}) = n] = C_n \,  p (pq)^n \ .
  $$
  \end{lemma}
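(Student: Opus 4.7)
The plan is to mirror the path-counting argument used for Lemma~\ref{cat} in the LSM case, adapted to the different stopping rule for EFSM. First I would decompose the event $\{N'(\tau_{EFSM}) = n\}$ according to the chronological ordering of the first $n+1$ honest block-times $S_1, \ldots, S_{n+1}$ and the first $n$ attacker block-times $S'_1, \ldots, S'_n$. Since $\tau_{EFSM}$ is by definition the first instant at which $N(t) = N'(t)+1$, conditioning on $\{N'(\tau_{EFSM}) = n\}$ forces exactly $2n+1$ Poisson events in the attack cycle, with the very last one being the honest block $S_{n+1}$.

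Next I would translate the stopping rule into a geometric constraint on the lattice path with vertices $(N(\Sigma_i), N'(\Sigma_i))$: before the final step, the path must travel from $(0,0)$ to $(n,n)$ without ever touching the line $\{N = N'+1\}$, equivalently while staying in the closed half-plane $\{N \leq N'\}$. The classical interpretation of the Catalan numbers (as recalled in Proposition~\ref{geometric_property}) then gives exactly $C_n$ admissible orderings. Each such ordering consists of $n+1$ honest events and $n$ attacker events; by the standard fact that the superposition of two independent Poisson processes decides each successive arrival to be honest (resp.\ attacker) with probability $p$ (resp.\ $q$) independently of the arrival times, a specific ordering carries probability $p^{n+1} q^n = p\,(pq)^n$.

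Multiplying the count by the per-ordering probability yields the claim $\mathbb{P}[N'(\tau_{EFSM}) = n] = C_n\, p (pq)^n$. The only point that needs a bit of care is checking that the path condition is exactly the Catalan one with no off-by-one slip; this is in fact cleaner here than in Lemma~\ref{cat}, because the EFSM cycle starts symmetrically at $(0,0)$ without the asymmetric handling of the first block that appeared in the LSM setup. As a sanity check, the base case $n=0$ returns $C_0\, p = p$, which matches the direct observation that $\{N'(\tau_{EFSM})=0\} = \{T_1 < T'_1\}$, an event of probability $p$, and this is consistent with the normalisation $\sum_{n \geq 0} C_n (pq)^n = C(pq)$ giving total mass $p\cdot C(pq) = 1$ as required by the $(p,q)$-Catalan distribution.
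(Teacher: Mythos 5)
Your argument is correct and is essentially the paper's own proof: the same decomposition of $\{N'(\tau_{EFSM})=n\}$ into interleavings of the first $n+1$ honest and $n$ attacker arrival times, the same identification with lattice paths counted by $C_n$, and the same per-ordering probability $p^{n+1}q^n$ coming from the superposition of the two Poisson processes. The only cosmetic difference is that the paper appends the redundant condition $S'_{n+1}>S_{n+1}$ as an extra term $\Sigma_{2n+2}$ in the ordering, which you correctly observe is already implied by the labels of the first $2n+1$ arrivals.
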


  \begin{proof}
  The event $\{ N' (\tau_{EFSM}) = n \}$ can be decomposed as a disjoint
  union of sub-events of the form $\{ \Sigma_1 < \ldots < \Sigma_{2 n + 1} <
  \Sigma_{2 n + 2} \}$ where for each $\Sigma_i$ there is $j$ such that $\Sigma_i \in \{ S_j, S'_j
  \}$, and $\Sigma_{2 n + 1} = S_{n + 1}$ and  $\Sigma_{2 n + 2} = S'_{n + 1}$. 
  The sequence of points with coordinates
  $(N (\Sigma_i), N' (\Sigma_i))$ for $i \in \{1,\ldots, 2 n + 1 \}$
  form a path starting at $(0, 0)$ and ending at $(n + 1, n)$ which never crosses
  the first diagonal $\{x = y\}$ in the Euclidean plane before reaching the point $(n + 1, n)$.
  For example,
  \[ \{ N' (\xi) = 2 \} = \{ S'_1 < S'_2 < S_1 < S_2 < S_3 < S'_3 \} \cup \{
     S'_1 < S_1 < S'_2 < S_2 < S_3 < S'_3 \} \]
  The number of such paths is $C_n$ (see Proposition \ref{geometric_property} from 
  Appendix \ref{appendix_Catalan_Distribution}). Moreover, the number of $S_j$ (resp.
  $S'_j$) in the sequence of $(\Sigma_k)_{1 \leq k \leq 2 n + 1}$ is
  equal to $n + 1$ (resp. $n$). Hence we get the result.
  \end{proof}

  From the expected value computation of a $(p,q)$-Catalan random variable 
  (Proposition \ref{catalan_expected_value} in Appendix \ref{appendix_Catalan_Distribution}) we get:
  
  \begin{corollary}\label{dercor2}
  We have 
  $$
  \mathbb{E} [N'(\tau_{EFSM})] = \frac{q}{p-q} \ .
  $$
  \end{corollary}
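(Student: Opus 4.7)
The plan is to combine the distribution identification in the preceding lemma with the generating-function machinery of the appendix. Since $N'(\tau_{EFSM})$ is a $(p,q)$-Catalan random variable, Proposition \ref{catalan_expected_value} gives its expectation in closed form, and substituting produces the stated value $q/(p-q)$. This is precisely the payoff of packaging Catalan distributions as a standalone object in the appendix, so the corollary is essentially a one-line invocation.

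To spell out the computation from scratch one would write
$$
\mathbb{E}[N'(\tau_{EFSM})] = \sum_{n \geq 0} n C_n p(pq)^n = p \cdot (pq) \cdot C'(pq)
$$
and differentiate the Catalan generating series $C(x)$. Using the functional equation $xC(x)^2 = C(x) - 1$ yields $C'(x) = C(x)^2/\sqrt{1-4x}$. Evaluating at $x = pq$ and exploiting the crucial identity $1 - 4pq = (p - q)^2$ (which holds because $p+q=1$) reduces $C(pq)$ to $1/p$ and $\sqrt{1-4pq}$ to $p-q$, whereupon the sum collapses to $q/(p-q)$.

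An alternative and arguably cleaner route, mirroring the remark following Corollary \ref{dercor}, is to use Doob's Stopping Time Theorem on the martingale $N'(t) - \alpha' t$. Applied to the bounded stopping time $\tau_{EFSM} \wedge t$ it yields $\mathbb{E}[N'(\tau_{EFSM} \wedge t)] = \alpha' \mathbb{E}[\tau_{EFSM} \wedge t]$; letting $t \to \infty$ and invoking monotone convergence, exactly as in the proof of Lemma \ref{nle}, gives $\mathbb{E}[N'(\tau_{EFSM})] = \alpha' \mathbb{E}[\tau_{EFSM}]$. Substituting $\alpha' = q/\tau_0$ together with Lemma \ref{pgame} immediately produces $q/(p-q)$.

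The main obstacle in the direct generating-function route is verifying the two Catalan identities $C(pq) = 1/p$ and $\sqrt{1-4pq} = p-q$; both are routine consequences of $p+q=1$ but constitute the only genuinely non-bookkeeping step. The martingale approach avoids even this and is more in the spirit of the new techniques emphasized in the introduction, so I would lead with it if I were writing this up from scratch; the authors instead cite the appendix, which is equally valid and keeps the section self-contained with respect to Catalan distributions.
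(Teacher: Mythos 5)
Your proposal is correct and its main route is exactly the paper's: identify $N'(\tau_{EFSM})$ as a $(p,q)$-Catalan random variable via the preceding lemma and invoke Proposition~\ref{catalan_expected_value}, whose proof in Appendix~\ref{appendix_Catalan_Distribution} is the same generating-function computation you spell out (your identities $C'(x)=C(x)^2/\sqrt{1-4x}$, $C(pq)=1/p$, $\sqrt{1-4pq}=p-q$ all check out). Your martingale alternative via Doob and Lemma~\ref{pgame} is also valid and is essentially the content of Theorem~\ref{poiga}, mirroring the remark the authors make after Corollary~\ref{dercor}.
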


We compute now $\mathbb{E} [R_{EFSM}]$. 

\begin{proposition} \label{erxif}
We have 
$$
\mathbb{E} [R_{EFSM}] = \frac{q}{p - q} b - g (\gamma) b
$$ 
with
$$ 
g (\gamma) =  \frac{1 - \gamma}{\gamma}   \left (1 - pC ((1 -\gamma) pq)\right ) \ .
$$
\end{proposition}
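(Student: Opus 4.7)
The plan is to condition on $N'(\tau_{EFSM}) = n$ and use the $(p,q)$-Catalan distribution from the preceding lemma. For $n = 0$ the cycle ends at the first honest block with $R_{EFSM} = 0$, so this case contributes nothing. For $n \geq 1$, the attack cycle contains exactly $n$ attacker-mined blocks and $n+1$ honest-mined blocks. The key structural observation is that in EFSM the attacker always publishes enough of his secret fork to maintain parity with the honest chain, so each of the $n+1$ honest blocks is, independently, built on top of the attacker's fork with probability $\gamma$ and on top of the honest chain with probability $1-\gamma$. In contrast with LSM, there is no distinguished ``final race'' round since the cycle terminates precisely at the instant $N(t) = N'(t)+1$; hence all $n+1$ honest blocks participate symmetrically in the $\gamma$-split.

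Applying Lemma \ref{col} to this sequence of $n+1$ independent Bernoulli$(\gamma)$ trials then yields
$$
\mathbb{E}[R_{EFSM} \mid N'(\tau_{EFSM}) = n] = \left( (n+1) - \frac{1 - (1-\gamma)^{n+1}}{\gamma} \right) b ,
$$
and the proof concludes by summing against the Catalan weights. The sum
$$
\frac{\mathbb{E}[R_{EFSM}]}{b} = \sum_{n \geq 0} \left( (n+1) - \frac{1 - (1-\gamma)^{n+1}}{\gamma} \right) C_n p (pq)^n
$$
decomposes into three pieces: the mean $\mathbb{E}[N'(\tau_{EFSM})] = q/(p-q)$ from Corollary \ref{dercor2}, the total mass $\sum_n C_n p(pq)^n = pC(pq) = 1$ (using $p+q=1$), and the generating function evaluation $\sum_n C_n p\bigl((1-\gamma)pq\bigr)^n = pC((1-\gamma)pq)$. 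Collecting terms, the constants combine into $-\frac{1-\gamma}{\gamma}$ and the Catalan series contributes $\frac{1-\gamma}{\gamma}\,pC((1-\gamma)pq)$, which reassemble into $-g(\gamma)$ as desired.

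The main obstacle is the reward identification in the middle step: one must justify carefully that Lemma \ref{col} applies uniformly across all $n+1$ honest blocks (not merely $n$ of them, as happens in LSM). This comes down to a careful accounting of how attacker blocks end up ``replacing'' honest blocks in the official blockchain as honest miners extend the attacker's published fork, together with the fact that the EFSM cycle has no separate final race where a different probabilistic rule would apply. Once this combinatorial step is in place, the rest is a routine unwinding of the Catalan generating function identity.
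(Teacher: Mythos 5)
Your overall strategy coincides with the paper's: condition on $N'(\tau_{EFSM})=n$, use the $(p,q)$-Catalan distribution of $N'(\tau_{EFSM})$, reduce the conditional expectation of the revenue to Lemma \ref{col}, and then sum the Catalan generating series; the final series manipulation you describe is correct and matches the paper. However, the key middle step is mis-justified in a way that matters. You assert that all $n+1$ honest blocks of the cycle participate ``symmetrically'' as independent Bernoulli$(\gamma)$ trials. This is false: the first honest block of a cycle is necessarily mined on the common root, before the attacker has published anything (in EFSM the attacker only releases blocks to match the public chain after the honest miners have extended it), so only the remaining $n$ honest blocks carry the probability $\gamma$ of being mined on the attacker's fork. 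Moreover, your application of Lemma \ref{col} is inconsistent with your own premise: for $m$ tosses the lemma gives $\mathbb{E}[Z]=m+1-\frac{1-(1-\gamma)^{m+1}}{\gamma}$, so $m=n+1$ trials would yield $n+2-\frac{1-(1-\gamma)^{n+2}}{\gamma}$, not the expression $n+1-\frac{1-(1-\gamma)^{n+1}}{\gamma}$ that you write. The formula you write down is the correct one, but it is the lemma applied to $m=n$ trials, i.e.\ exactly to the $n$ honest blocks other than the first. Had you carried your stated premise through consistently, the final answer would have come out wrong; as it stands, the correct formula appears without a valid derivation.

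The correct accounting (which is what the paper does, tersely) is: given $N'(\tau_{EFSM})=n$, the honest miners produce $n+1$ blocks, of which the last $n$ are each, independently, built on the attacker's released fork with probability $\gamma$; the attacker's revenue is $Zb$ where $Z$ is the largest index (in $\{1,\dots,n\}$, or $0$ if none) of such an event, since the last honest block landing on the attacker's fork determines how many attacker blocks survive into the official chain. Lemma \ref{col} with $n$ tosses then gives $\mathbb{E}[R_{EFSM}/b \mid N'(\tau_{EFSM})=n]=n+1-\frac{1-(1-\gamma)^{n+1}}{\gamma}$, consistent with the $n=0$ case giving $0$. Your closing remark flags this count as the main obstacle but resolves it in the wrong direction (you claim $n+1$ trials here versus $n$ in LSM; in fact it is $n$ here versus $n-1$ in the lost-competition case of LSM).
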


\begin{proof}
  By definition of $\tau_{EFSM}$, if we know that $N' (\tau_{EFSM}) =
  n$, then the honest miners have mined $n+1$ blocks at time $t=\tau_{EFSM}$.
  Except for the first block, the probability that
  the block validated by the honest miners is mined on a fork created by the stubborn miner is $\gamma$.
  So, by Lemma \ref{col} from Appendix \ref{appendix_biased_coin_tossing},
  $$ 
  \mathbb{E} \left[ \frac{R_{EFSM}}{b} \middle | N'
     (\tau_{EFSM}) = n \right] = n + 1 - \frac{1 - (1 - \gamma)^{n +
     1}}{\gamma}  \ .
  $$
  Therefore, we have 
  \begin{align*}
    & \frac{\mathbb{E}[R_{EFSM}]}{b} = \sum_{n \geq 0}
    \mathbb{E} \left[ \frac{R_{EFSM}}{b} \middle |N' (\tau_{EFSM})
    = n \right] \cdot \mathbb{P} [N' (\tau_{EFSM}) = n]\\
    & = \sum_{n \geq 0} \left( n + 1 - \frac{1 - (1 - \gamma)^{n +
    1}}{\gamma} \right) \cdot \mathbb{P} [N' (\tau_{EFSM}) = n]\\
    & = \sum_{n \geq 0} n  \, \mathbb{P} [N' (\tau_{EFSM}) =
    n] +  1 - \frac{\gamma - 1}{\gamma}   \sum_{n \geq 0} \mathbb{P}
    [N' (\tau_{EFSM}) = n] 
    + \frac{1 - \gamma}{\gamma} p \sum_{n \geq 0} ((1 - \gamma)
    pq)^n C_n\\
    & = \mathbb{E} [N' (\tau_{EFSM})] -  \frac{1 -
    \gamma}{\gamma}  + \frac{1 - \gamma}{\gamma}  pC ((1
    - \gamma) pq)\\
    & = \frac{q}{p - q} - \frac{1 - \gamma}{\gamma}   \left ( 1 - pC
    ((1 - \gamma) pq)\right ) \ .
  \end{align*}

  \end{proof}

\begin{theorem}
  The revenue ratio of the ``Equal Fork Stubborn mining'' strategy is
  $$
  \Gamma (EFSM) = \left( q -  \frac{1 - \gamma}{\gamma}
      (p - q) \left ( 1 - pC ((1 - \gamma) pq)\right ) \right)  \frac{b}{\tau_0}  \ .
  $$
\end{theorem}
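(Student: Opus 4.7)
The proof is a direct assembly of the ingredients already established, so the plan is just to invoke them and simplify.

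First I would write down the basic formula $\Gamma(EFSM) = \mathbb{E}[R_{EFSM}]/\mathbb{E}[\tau_{EFSM}]$ from the definition of the revenue ratio given in the Introduction (and in Section 2.1). Then I would substitute the expression for $\mathbb{E}[R_{EFSM}]$ obtained in Proposition \ref{erxif} into the numerator, and the expression $\mathbb{E}[\tau_{EFSM}] = \tau_0/(p-q)$ obtained in Lemma \ref{pgame} into the denominator.

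Next, I would carry out the routine algebraic simplification: multiplying $\mathbb{E}[R_{EFSM}]$ by $(p-q)/\tau_0$ turns the first term $\frac{q}{p-q}\,b$ into $q\,b/\tau_0$ and the second term $\frac{1-\gamma}{\gamma}\bigl(1-pC((1-\gamma)pq)\bigr)\,b$ acquires a factor $(p-q)$, yielding $\frac{(1-\gamma)(p-q)}{\gamma}\bigl(1-pC((1-\gamma)pq)\bigr)\,b/\tau_0$. Collecting these terms gives exactly the claimed formula.

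There is no genuine obstacle here: the substantive content sits in Proposition \ref{erxif} (the Catalan-number computation of $\mathbb{E}[R_{EFSM}]$) and Lemma \ref{pgame} (the Poisson-game computation of $\mathbb{E}[\tau_{EFSM}]$). Once those are in hand, the theorem is just their quotient. If I wanted to be slightly more careful, I would note that $\mathbb{E}[\tau_{EFSM}] < \infty$ since $p > q$, so the ratio is well-defined, and that both quantities indeed satisfy the hypotheses of the profitability framework of \cite{GPM2018} recalled in Section 2.1.
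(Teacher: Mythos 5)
Your proposal is correct and matches the paper's own proof, which likewise just invokes Lemma \ref{pgame} and Proposition \ref{erxif} and takes the quotient $\mathbb{E}[R_{EFSM}]/\mathbb{E}[\tau_{EFSM}]$. The algebraic simplification you spell out is exactly the intended (and correct) computation.
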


\begin{proof}
 Use  Lemma \ref{pgame} and Proposition \ref{erxif}.
\end{proof}

\subsection{Difficulty adjustment.}
\begin{proposition}
  \label{axif} The parameter $\delta_{EFSM}$ updating the difficulty is 
  $$
  \delta_{EFSM} = \frac{1}{p} > 1 \ .
  $$
\end{proposition}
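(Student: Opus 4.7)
The plan is to apply Theorem \ref{rap} directly, invoking the two computations already carried out for the EFSM cycle. Recall that Theorem \ref{rap} gives
$$
\delta_{EFSM} = \frac{\mathbb{E}[\tau_{EFSM}]}{\tau_0 \cdot \mathbb{E}[N(\tau_{EFSM}) \vee N'(\tau_{EFSM})]} \ ,
$$
and Lemma \ref{pgame} already supplies $\mathbb{E}[\tau_{EFSM}] = \tau_0/(p-q)$. So the only real task is to evaluate the denominator.

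The key structural observation, built into the definition of $\tau_{EFSM} = \inf\{t \geq 0 : N(t) = N'(t) + 1\}$, is that at the stopping time the honest miners are exactly one block ahead of the stubborn miner. Consequently $N(\tau_{EFSM}) = N'(\tau_{EFSM}) + 1$ almost surely, and in particular $N(\tau_{EFSM}) \vee N'(\tau_{EFSM}) = N'(\tau_{EFSM}) + 1$. Taking expectations and plugging in Corollary \ref{dercor2}, which gives $\mathbb{E}[N'(\tau_{EFSM})] = q/(p-q)$, I get
$$
\mathbb{E}[N(\tau_{EFSM}) \vee N'(\tau_{EFSM})] = \frac{q}{p-q} + 1 = \frac{p}{p-q} \ .
$$

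Substituting back into the formula from Theorem \ref{rap} the factors of $(p-q)$ cancel and I am left with $\delta_{EFSM} = 1/p$. The inequality $1/p > 1$ is then just the condition $p < 1$, i.e. $q > 0$, which holds because the attacker has positive hashing power. There is no real obstacle here: all the substantive work (the stopping time analysis, the Catalan distribution of $N'(\tau_{EFSM})$ and its expected value) has been done in the earlier lemmas, and the proof reduces to a short algebraic substitution that exploits the fact that $|N - N'| = 1$ at $\tau_{EFSM}$.
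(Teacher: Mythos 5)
Your proof is correct and structurally the same as the paper's: both reduce Theorem \ref{rap} to evaluating the expected number of blocks added to the official chain over a cycle, using the fact that $N(\tau_{EFSM}) = N'(\tau_{EFSM}) + 1$ almost surely at the stopping time. The only difference is how the denominator is evaluated: the paper writes $N(\tau_{EFSM}) \vee N'(\tau_{EFSM}) = N(\tau_{EFSM})$ and invokes the Wald-type identity $\mathbb{E}[N(\tau_{EFSM})] = \alpha\,\mathbb{E}[\tau_{EFSM}] = \frac{p}{p-q}$ coming from Doob's theorem (Theorem \ref{poiga}), whereas you write it as $N'(\tau_{EFSM}) + 1$ and invoke the Catalan-distribution computation of Corollary \ref{dercor2}, getting $\frac{q}{p-q}+1 = \frac{p}{p-q}$. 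Both routes are legitimate and give the same value; the paper's is marginally more self-contained in that it does not depend on the combinatorial identification of the law of $N'(\tau_{EFSM})$, while yours has the small virtue of reusing a result already needed for the revenue computation. Either way the final substitution $\delta_{EFSM} = \frac{1/(p-q)}{p/(p-q)} = \frac{1}{p} > 1$ is immediate.
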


\begin{proof}
  At the end of an attack cycle, the number of new blocks in the official
  blockchain is $N (\tau_{EFSM})$ and $\mathbb{E} [N
  (\tau_{EFSM})] = \alpha \mathbb{E} [\tau_{EFSM}]$. Therefore we have
  $$
  \delta_{EFSM} = \frac{\frac{\mathbb{E}
  [\tau_{EFSM}]}{\tau_0}}{\mathbb{E} [N (\tau_{EFSM})]} =
  \frac{1}{p} \ .
  $$
\end{proof}

\subsection{Apparent hashrate after a difficulty adjustment.}
 It's now easy to get the long term apparent hashrate of the EFSM strategy.
 \begin{corollary}
  After a difficulty adjustment, the apparent hashrate is 
  $$
  \tilde{q}_{EFSM} = \frac{q}{p} - \frac{(1 - \gamma) (p -q)}{\gamma p} \left (1 - p \, C ((1 - \gamma) pq)\right )  \ .
  $$
 \end{corollary}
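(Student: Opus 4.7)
The plan is to combine the three ingredients already assembled in this section: the formula for $\Gamma(\text{EFSM})$ from the preceding theorem, the difficulty update factor $\delta_{EFSM} = 1/p$ from Proposition \ref{axif}, and the general relation $\tilde\Gamma = \Gamma\cdot\delta$ from Theorem \ref{rap}, together with the definition of apparent hashrate $\tilde q = \tilde\Gamma\cdot\tau_0/b$ given in the introduction.

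First I would write $\tilde\Gamma(\text{EFSM}) = \Gamma(\text{EFSM})\cdot\delta_{EFSM}$ by Theorem \ref{rap}. Substituting the expression
$$
\Gamma(\text{EFSM}) = \left( q - \frac{1-\gamma}{\gamma}(p-q)\bigl(1 - pC((1-\gamma)pq)\bigr)\right)\frac{b}{\tau_0}
$$
and $\delta_{EFSM} = 1/p$ gives an expression for $\tilde\Gamma(\text{EFSM})$ in which the overall factor $b/\tau_0$ survives and the bracket gets divided by $p$.

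Next I would translate to the apparent hashrate via $\tilde q_{EFSM} = \tilde\Gamma(\text{EFSM})\cdot\tau_0/b$, which simply strips the $b/\tau_0$ factor. The only remaining work is to distribute the $1/p$ across the two terms in the bracket to obtain
$$
\tilde q_{EFSM} = \frac{q}{p} - \frac{(1-\gamma)(p-q)}{\gamma p}\bigl(1 - pC((1-\gamma)pq)\bigr),
$$
which is the stated formula.

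There is no real obstacle here: the statement is a direct algebraic consequence of previously proved results. The only care needed is bookkeeping of the factor $1/p$ and of the $b/\tau_0$ normalisation, to make sure the two prefactors are not confused.
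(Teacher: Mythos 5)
Your proposal is correct and follows exactly the paper's route: the paper's proof is the one-line ``Use Theorem \ref{rap} and Proposition \ref{axif}'', and you have simply spelled out the same substitution of $\delta_{EFSM}=1/p$ into $\tilde\Gamma=\Gamma\cdot\delta$ and the conversion to $\tilde q$ via the $\tau_0/b$ normalisation. The algebraic bookkeeping of the $1/p$ factor is right, so nothing is missing.
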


\begin{proof}
  Use Theorem \ref{rap} and Proposition \ref{axif}.
\end{proof}

\section{Comparison of strategies.}

For different values of the parameters $q$ and $\gamma$, we can compare the profitability 
of the different strategies after a 
difficulty adjustment by comparing the revenue ratios after a difficulty adjustment, 
or, equivalently, their apparent hashrate. 


We consider the four strategies: 
\begin{itemize}
 \item Honest Mining (HM).
 \item Selfish Mining (SM).
 \item Lead-Stubborn Mining (LSM).
 \item Equal Fork Stubborn mining (EFSM).
\end{itemize}

We color the region $(q, \gamma) \in [0,0.5] \times [0,1]$ according to which strategy is 
more profitable, and we obtain Figure 1.

\begin{figure}[!ht]
   \includegraphics[height=7cm, width=9cm]{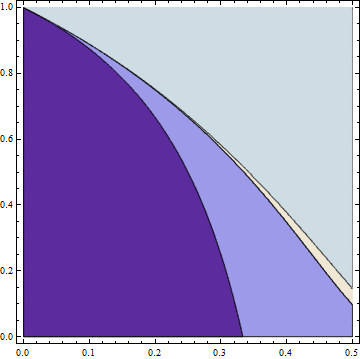}
   \caption{\label{fig}
   Profitability. From left to right: HM, SM, LSM, EFSM.
   \medskip}
\end{figure}

From left to right, the best strategy is successively HM, SM, LSM and EFSM. 
Note that, the LSM strategy is superior 
only on a thin domain.

\medskip

A similar picture is numerically computed by Montecarlo simulations in  \cite{NKMS2016} where 
only a numerical study is 
carried out. See Figure 2 (disregard regions R4 to R7 that correspond to ``Trail $T_j$ Stubborn Mining'' 
strategies that are studied in \cite{GPM2018-2}).

\begin{figure}[!ht]
   \includegraphics[height=8cm, width=10cm]{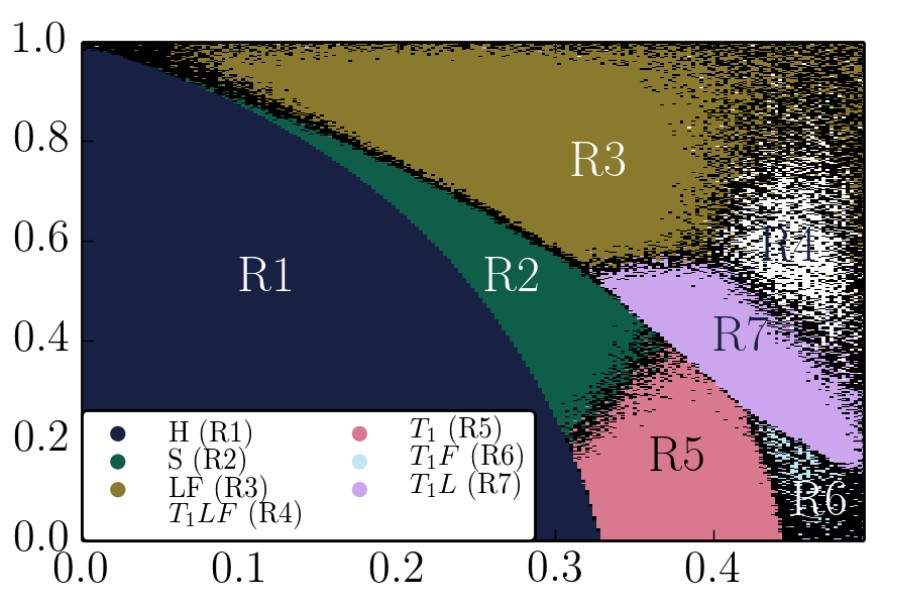}
   \caption{\label{fig2} Figure from \cite{NKMS2016}.}
\end{figure}

Regions R1, R2 and R3 correspond respectively to HM, SM and EFSM. We notice 
the absence of the region corresponding to LSM in between R2 and R3. 
Apparently the numerical methods did not detect it (also it is  noticeable 
how in their figure the boundary between R2 and R3 is more blurred than other boundaries,
as for example the one separating R1 and R2).

%

\appendix

\section{Catalan distributions.}\label{appendix_Catalan_Distribution}

For $n\geq 0$, we denote by $C_n$ the $n$-th Catalan number
 $$
 C_n = \frac{1}{2n+1} \binom{2n}{n} = \frac{(2 n) !}{n! (n + 1) !}  \ .
 $$ 
 We refer to \cite{K08} for background and combinatorial properties of Catalan numbers. 
 Their generating series is 
 $$
 C (x) = \sum_{n=0}^{+\infty} C_n x^n = \frac{1-\sqrt{1-4x}}{2x}=\frac{2}{1 + \sqrt{1 - 4 x}} \ .
 $$
Observe that $\sqrt{1-4pq}=p-q$ and $C(pq)=1/p$, which justifies the following definition:

 \begin{definition}[Catalan distributions]
 Let $1/2<p<1$ and $q=1-p$.
 A random variable $X$ taking values in $\NN$ is a $(p,q)$-Catalan random variable if it 
 follows the $(p,q)$-Catalan distribution, that is,  for $n\geq 0$
 $$
 \mathbb P [X=n]= C_{n} p(pq)^n  \ .
 $$
 The second type $(p,q)$-Catalan distribution is defined by $\mathbb P [X=0]=p$ and for $n\geq 1$,
 $$
 \mathbb P [X=n]= C_{n-1} (pq)^n  \ .
 $$
 \end{definition}

 \begin{proposition}\label{catalan_expected_value}
  The expected value of a  $(p,q)$-Catalan random variable $X$ is
  $$
  \EE[X]=\frac{q}{p-q}
  $$
  If $X$ is a second type $(p,q)$-Catalan random variable then
  $$
  \EE[X]=\frac{pq}{p-q} \ .
  $$
 \end{proposition}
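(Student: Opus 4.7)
The plan is to use the Catalan generating function $C(x) = \sum_{n\geq 0} C_n x^n$ together with the two key identities already noted in the paper, $C(pq)=1/p$ and $\sqrt{1-4pq}=p-q$, to reduce each expectation to a single evaluation of $C$ and $C'$ at $x=pq$.

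First I would recall the quadratic functional equation $C(x) = 1 + x\,C(x)^2$, which follows directly from $C(x) = (1-\sqrt{1-4x})/(2x)$. Differentiating gives $C'(x)(1-2xC(x)) = C(x)^2$, and since $1-2xC(x) = \sqrt{1-4x}$, I obtain the clean formula
$$
C'(x) = \frac{C(x)^2}{\sqrt{1-4x}}.
$$
Evaluating at $x=pq$ using $C(pq)=1/p$ and $\sqrt{1-4pq}=p-q$ yields
$$
C'(pq) = \frac{1}{p^2(p-q)}.
$$

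For the first-type $(p,q)$-Catalan distribution, I would write
$$
\EE[X] = \sum_{n\geq 0} n\, C_n\, p (pq)^n = p \cdot (pq)\, C'(pq)
       = p \cdot pq \cdot \frac{1}{p^2(p-q)} = \frac{q}{p-q},
$$
which is the claimed value. (As a sanity check the distribution does sum to $1$, since $\sum_n C_n p(pq)^n = p\,C(pq) = 1$.)

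For the second-type distribution, after a shift of index I would write
$$
\EE[X] = \sum_{n\geq 1} n\, C_{n-1} (pq)^n = pq \sum_{m\geq 0} (m+1)\, C_m (pq)^m = pq\bigl(pq\,C'(pq) + C(pq)\bigr),
$$
and plugging in the values of $C(pq)$ and $C'(pq)$ computed above gives
$$
\EE[X] = \frac{q^2}{p-q} + q = \frac{pq}{p-q},
$$
as asserted. The only nontrivial step is deriving the compact form of $C'(x)$; everything else is routine algebra with the generating function.
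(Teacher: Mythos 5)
Your proof is correct and follows essentially the same route as the paper: both reduce the expectations to evaluating a derivative of the Catalan generating series at $x=pq$ using $C(pq)=1/p$ and $\sqrt{1-4pq}=p-q$. The only cosmetic difference is that you compute $C'(pq)$ via the functional equation $C=1+xC^2$, while the paper works with $\frac{d}{dx}\bigl(xC(x)\bigr)=(1-4x)^{-1/2}$ directly; the algebra is otherwise the same.
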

\begin{lemma}
We have
$$
\frac{d}{dx} \left ( xC(x)\right ) \bigg\rvert_{x=pq} =\frac{1}{p-q} \ .
$$
\end{lemma}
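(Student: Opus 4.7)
The plan is to use the explicit closed form for the generating series directly rather than manipulate the series term-by-term. Since $C(x) = \frac{1-\sqrt{1-4x}}{2x}$, multiplying by $x$ collapses the denominator and yields the elementary expression
$$
xC(x) = \frac{1-\sqrt{1-4x}}{2}.
$$
This removes the apparent singularity at $x=0$ and makes the function manifestly differentiable on the interval where $1-4x>0$, in particular at $x=pq$ (since $pq<1/4$ for $p\neq q$).

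Differentiating with respect to $x$ is now a one-line computation: the chain rule gives $\frac{d}{dx}(xC(x)) = \frac{1}{\sqrt{1-4x}}$. To evaluate at $x=pq$, I would use the identity $1-4pq=(p+q)^2-4pq=(p-q)^2$, which holds because $p+q=1$. Taking the positive square root, justified by the standing assumption $p>1/2>q$ so that $p-q>0$, yields $\sqrt{1-4pq}=p-q$ and hence the claimed value $\frac{1}{p-q}$.

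There is no serious obstacle here; the only point requiring a brief remark is the choice of branch of the square root, which is fixed by the convention $p>q$ used throughout the paper. An alternative route, less efficient but perhaps more in the spirit of the Catalan setup, would be to write $(xC(x))' = \sum_{n\geq 0}(n+1)C_n x^n$ and recognize the right-hand side at $x=pq$ as the normalization of the $(p,q)$-Catalan expected value identity; however, the direct closed-form computation above is strictly shorter and self-contained, so I would present only that one.
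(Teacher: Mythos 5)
Your proof is correct and follows exactly the paper's route: differentiate the closed form $xC(x)=\frac{1-\sqrt{1-4x}}{2}$ to get $(1-4x)^{-1/2}$ and evaluate using $\sqrt{1-4pq}=p-q$ (an identity the paper records just before the definition of the Catalan distributions). Your extra remarks on the branch of the square root and the removable singularity at $x=0$ are fine but not needed beyond what the paper already assumes.
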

\begin{proof}[Proof of the Lemma]
 We have $\frac{d}{dx} \left ( xC(x)\right )=(1-4x)^{-1/2}$ and the result follows.
\end{proof}

\begin{proof}[Proof of the Proposition]
For the $(p,q)$-Catalan random variable we have
\begin{align*}
 \EE[X] &= \sum_{n=0}^{+\infty} n . \mathbb P[X=n] =\sum_{n=0}^{+\infty} n . C_{n} p(pq)^n \\
 &=p\sum_{n=0}^{+\infty}  C_{n} (n+1) (pq)^{n}-p\sum_{n=0}^{+\infty} C_{n} (pq)^{n}\\
 &= p \, \frac{d}{dx} \left ( xC(x)\right ) \bigg\rvert_{x=pq} -p\, C(pq)\\
 &= \frac{p}{p-q}-1 =\frac{q}{p-q} \ \ .
\end{align*}

For the second type $(p,q)$-Catalan random variable we have

\begin{align*}
 \EE[X] &= \sum_{n=0}^{+\infty} n . \mathbb P[X=n] =\sum_{n=1}^{+\infty} n . C_{n-1} (pq)^n \\
 &=\sum_{m=0}^{+\infty}  C_{m} (m+1) (pq)^m = pq \, \frac{d}{dx} \left (x C(x)\right ) \bigg\rvert_{x=pq} \\
 &= \frac{pq}{p-q} \ \ .
\end{align*}
\end{proof}

The main geometric combinatorial property of Catalan numbers used in this article is the following 
well known enumeration (see section 9 of \cite{K08} p.259)

\begin{proposition}\label{geometric_property}
 The number of paths in $\NN^2$ going up and right at each step that start 
 at $(0,0)$ and end-up at $(n+1, n+1)$ without touching the first diagonal is $C_n$.
\end{proposition}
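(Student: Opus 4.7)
The plan is to interpret the proposition, reduce it to the standard count of weakly diagonal-dominating Dyck paths, and conclude via Andr\'e's reflection principle.

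First, I would fix the interpretation. Since both endpoints $(0,0)$ and $(n+1,n+1)$ lie on the diagonal $\{y=x\}$, ``without touching the first diagonal'' must refer to the strict interior of the path. The involution $(x,y)\mapsto(y,x)$ interchanges paths that stay strictly above the diagonal with those that stay strictly below it, and these two classes together exhaust all diagonal-avoiding paths. Consistent with the usage in Lemma \ref{cat} (where paths to $(n,n)$ staying strictly above yield $C_{n-1}$, i.e., the present formula with index shifted by one), the claim is that the count \emph{on each side} equals $C_n$. I focus on paths with $y>x$ at every interior vertex.

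Second, I would peel off forced steps. Such a strictly-above path must begin with an up-step to $(0,1)$ (otherwise the first interior vertex $(1,0)$ lies below the diagonal) and must end with a right-step from $(n,n+1)$ (otherwise the penultimate vertex $(n+1,n)$ lies below). The remaining middle portion is a monotone lattice path from $(0,1)$ to $(n,n+1)$ with $y>x$ throughout. Translating by $y\mapsto y-1$, this becomes a monotone path from $(0,0)$ to $(n,n)$ satisfying $y\geq x$ at every vertex, i.e.\ a classical ``above-diagonal'' Dyck-type path of length $2n$. Thus it suffices to count the latter.

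Third, I would apply the reflection principle. The total number of monotone paths from $(0,0)$ to $(n,n)$ is $\binom{2n}{n}$. A path is \emph{bad} precisely when it touches the forbidden line $y=x-1$; reflecting the initial segment of a bad path through this line, up to its first hit, yields a bijection with the set of all monotone paths from the reflected starting point $(1,-1)$ to $(n,n)$. Such paths use $n-1$ right-steps and $n+1$ up-steps, so they number $\binom{2n}{n-1}$. Subtracting gives
$$
\binom{2n}{n}-\binom{2n}{n-1}=\frac{1}{n+1}\binom{2n}{n}=C_n,
$$
which is the desired one-sided count and completes the proof.

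Since the argument is entirely classical, there is no real obstacle. The points deserving care are the well-definedness of the reflection bijection — surjectivity follows from the fact that every monotone path from $(1,-1)$ to $(n,n)$ must cross $y=x-1$, and injectivity from reflecting at the first crossing — together with the final binomial simplification, both of which are standard.
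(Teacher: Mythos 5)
Your argument is correct. Note that the paper itself gives no proof of this proposition: it simply cites the standard enumeration from Koshy's book (section 9, p.~259), so you have supplied a self-contained argument where the authors rely on a reference. Your proof is the classical one: you correctly resolve the ambiguity in the statement (the endpoints lie on the diagonal, so ``without touching'' must mean at interior vertices, and the intended count is one-sided, as confirmed by the application in Lemma~\ref{cat} where paths to $(n,n)$ strictly above the diagonal are counted by $C_{n-1}$); you peel off the forced initial up-step and final right-step to reduce to monotone paths from $(0,0)$ to $(n,n)$ with $y\geq x$; and the reflection across $y=x-1$ sends $(0,0)$ to $(1,-1)$, giving the bad-path count $\binom{2n}{n-1}$ and hence $\binom{2n}{n}-\binom{2n}{n-1}=C_n$. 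All steps check out, including the surjectivity/injectivity of the reflection bijection that you flag. The only thing your write-up buys beyond the paper is self-containedness; conversely, the citation route is what the authors chose since the fact is entirely standard.
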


\section{Biased coin tossing.}\label{appendix_biased_coin_tossing}

The following lemma is useful for both LSM and EFSM strategies.

\begin{lemma}\label{col}
 Let $0<\gamma<1$ and $n \in \mathbb{N}$. Let $\omega \in \{ 0, 1 \}^{n}$ denotes 
 the outcome of tossing a biased coin $n$ times with
  $\mathbb{P} [\omega_i = 1] = \gamma$ for $i \in \{1,\ldots n \}$. 
  Let $Z (\omega) = \sup \{ i \in \{1,\ldots n \} ; \omega_i = 1 \} \cup \{ 0 \}$. 
  Then we have $\mathbb{E} [Z] = n+1 - \frac{1 - (1 - \gamma)^{n+1}}{\gamma}$.
\end{lemma}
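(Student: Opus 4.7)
The plan is to compute $\mathbb{E}[Z]$ via the tail-sum formula. Since $Z$ takes values in $\{0,1,\ldots,n\}$, we have
$$\mathbb{E}[Z] = \sum_{k=1}^{n} \mathbb{P}[Z \geq k].$$
The event $\{Z \geq k\}$ (for $k \geq 1$) is exactly the event that at least one of the tosses $\omega_k, \omega_{k+1}, \ldots, \omega_n$ equals $1$. By independence of the $n-k+1$ relevant tosses,
$$\mathbb{P}[Z \geq k] = 1 - (1-\gamma)^{n-k+1}.$$

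Plugging this into the tail-sum and reindexing $j = n-k+1$ yields
$$\mathbb{E}[Z] = n - \sum_{k=1}^{n}(1-\gamma)^{n-k+1} = n - \sum_{j=1}^{n}(1-\gamma)^{j}.$$
The remaining sum is a finite geometric series equal to $\frac{(1-\gamma) - (1-\gamma)^{n+1}}{\gamma}$. Adding and subtracting $\gamma/\gamma = 1$ in the numerator rewrites this as $\frac{1 - (1-\gamma)^{n+1}}{\gamma} - 1$, so
$$\mathbb{E}[Z] = n + 1 - \frac{1 - (1-\gamma)^{n+1}}{\gamma},$$
which is the claimed formula.

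There is no genuinely hard step here; the only subtlety is making sure the tail-sum formula is applied correctly when $Z$ is allowed to take the value $0$ (the $k=0$ term contributes $\mathbb{P}[Z \geq 0] = 1$, which must be omitted — equivalently, one starts the sum at $k=1$). An alternative proof would compute the sum $\sum_{k=1}^{n} k \gamma (1-\gamma)^{n-k}$ directly by differentiating a geometric series, but the tail-sum approach is cleaner and avoids having to separately handle the atom $\mathbb{P}[Z=0]=(1-\gamma)^n$.
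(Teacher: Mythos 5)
Your proof is correct: the identification of $\{Z\geq k\}$ with ``at least one head among tosses $k,\dots,n$'' gives $\mathbb{P}[Z\geq k]=1-(1-\gamma)^{n-k+1}$, and the geometric-series evaluation is right (one can check $n=1$: the formula gives $\gamma$, as it should). This is essentially the paper's argument — the paper applies the same tail-sum computation to the reversed variable $n+1-Z=\tilde Z\wedge(n+1)$, where $\tilde Z$ is the waiting time for the first head, which produces the identical sum $\sum_{i=0}^{n}(1-\gamma)^{i}$; your version just works with $Z$ directly.
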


\begin{proof}
  Note that $n+1 - Z = \inf \{ i\in\{ 1,\ldots, n\} ; \omega_{n+1-i} = 1\}\cup\{n+1\} = {\tilde Z}\wedge (n+1)$ 
  where ${\tilde Z}$ is the number of trials (stopping time) 
  before getting "Heads" when a coin is flipped repeatedly with a probability $\gamma$ 
  of getting "Heads" each time.  We have
\begin{align*}
  \mathbb{E} [\tilde{Z} \wedge (n+1)] & = \sum_{i = 1}^{n } i\mathbb{P}
  [\tilde{Z} = i] + (n+1)\mathbb{P} [\tilde{Z} > n ]\\
  & = \sum_{i = 1}^{n } i (\mathbb{P} [\tilde{Z} > i - 1] -\mathbb{P}
  [\tilde{Z} > i]) + (n+1)\mathbb{P} [\tilde{Z} > n ]\\
  & = \sum_{i = 0}^{n} \mathbb{P} [\tilde{Z} > i] = \sum_{i = 0}^{n}  
  (1 - \gamma)^i = \frac{1 - (1 - \gamma)^{n+1}}{\gamma}
\end{align*}
\end{proof}

\section{Poisson Games.}\label{appendix_Poisson_games}

\begin{theorem}\label{poiga}
  Let $N (t)$ (resp. $N' (t)$) be a Poisson process with parameter $\alpha$
  (resp. $\alpha'$). Let $\tau$ be the stopping time defined by $\tau = \inf
  \{ t \in \mathbb{R}_+ ; N (t) = N' (t) + 1 \}$. If $\alpha > \alpha'$ then
  $\tau \in L^1, N(\tau)\in L^1, \mathbb{E} [\tau] = \frac{1}{\alpha - \alpha'}$ 
  and $\mathbb{E} [N(\tau)] = \frac{\alpha}{\alpha - \alpha'}$.
\end{theorem}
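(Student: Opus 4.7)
The plan is to first establish $\EE[\tau] < \infty$ by reducing to a biased discrete-time random walk, and then extract both expected values from Doob's optional stopping theorem applied to the compensated Poisson martingales $N(t) - \alpha t$ and $N'(t) - \alpha' t$.

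First, I would identify the embedded random walk. Let $(\sigma_n)_{n \geq 1}$ be the jump times of the merged process $N + N'$, which is Poisson with rate $\alpha + \alpha'$, and set $X_n := N(\sigma_n) - N'(\sigma_n)$. Because each jump independently belongs to $N$ with probability $\rho := \alpha/(\alpha + \alpha') > 1/2$ and to $N'$ otherwise, $(X_n)$ is a simple asymmetric random walk on $\ZZ$ starting at $0$, and $\tau = \sigma_T$ with $T := \inf\{n \geq 1 : X_n = 1\}$. Classical first-passage theory for biased walks (obtained, for instance, by applying optional stopping to the multiplicative martingale $((1-\rho)/\rho)^{X_n}$ to get $\mathbb{P}[T < \infty] = 1$, and then solving a one-step recursion for $\EE[T]$) gives $\EE[T] = 1/(2\rho - 1) < \infty$. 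Wald's identity applied to the i.i.d.\ $\mathrm{Exp}(\alpha + \alpha')$ inter-arrival times then yields
$$
\EE[\tau] = \frac{\EE[T]}{\alpha + \alpha'} = \frac{1}{\alpha - \alpha'},
$$
so in particular $\tau \in L^1$.

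Second, I would compute $\EE[N(\tau)]$ by martingale methods. The compensated process $N(t) - \alpha t$ is a martingale for the natural filtration. Since $\tau \wedge t$ is bounded, Doob's optional stopping theorem gives $\EE[N(\tau \wedge t)] = \alpha\, \EE[\tau \wedge t]$ for every $t > 0$. Letting $t \to \infty$ and applying monotone convergence to both sides (both $N(\tau \wedge t)$ and $\tau \wedge t$ are nondecreasing in $t$, $\tau < \infty$ a.s., and $\EE[\tau] < \infty$ from Step~1) produces
$$
\EE[N(\tau)] = \alpha\, \EE[\tau] = \frac{\alpha}{\alpha - \alpha'},
$$
confirming $N(\tau) \in L^1$. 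As an internal cross-check, running the same argument with $N'(t) - \alpha' t$ gives $\EE[N'(\tau)] = \alpha' \EE[\tau]$, and subtracting reproduces $\EE[N(\tau) - N'(\tau)] = 1$, which matches the defining identity $N(\tau) = N'(\tau) + 1$.

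The main obstacle is the circular-looking nature of the problem: the quickest derivation of $\EE[\tau] = 1/(\alpha-\alpha')$ would use the compensated-Poisson identity $\EE[N(\tau) - N'(\tau)] = (\alpha - \alpha')\EE[\tau]$ directly, but optional stopping at the unbounded stopping time $\tau$ is only legitimate once $L^1$ integrability is known. The reduction to the embedded discrete biased random walk in Step~1 sidesteps this by furnishing $\EE[\tau] < \infty$ from well-established first-passage results before the martingale identities are invoked.
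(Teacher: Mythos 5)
Your proof is correct, but the route to $\EE[\tau]=\tfrac{1}{\alpha-\alpha'}$ is genuinely different from the paper's. The paper never leaves continuous time: it applies Doob's theorem at the bounded time $\tau\wedge t$ to both compensated martingales $N(t)-\alpha t$ and $N'(t)-\alpha' t$ and, after a conditioning computation exploiting $N(\tau)=N'(\tau)+1$ on $\{\tau\le t\}$, arrives at the exact identity $(\alpha-\alpha')\,\EE[\tau\,\mathbf{1}_{\tau\le t}]=\PP[\tau\le t]$; monotone convergence then delivers integrability and the value of $\EE[\tau]$ simultaneously, with no input from discrete random-walk theory. You instead superpose the two processes, note that the marks are i.i.d.\ Bernoulli trials with success probability $\rho=\alpha/(\alpha+\alpha')$, reduce $\tau$ to the first-passage time $T$ of a biased simple random walk, import the classical value $\EE[T]=1/(2\rho-1)$, and convert back with Wald's identity. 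Your approach buys transparency --- finiteness of $\EE[\tau]$ is outsourced to a standard first-passage fact rather than extracted from a somewhat delicate cancellation --- at the cost of two extra ingredients (Poisson superposition/thinning and Wald) that the paper avoids; note also that the one-step recursion for $\EE[T]$ only pins down its value once finiteness is known, so you are implicitly relying on the standard truncation argument (e.g.\ optional stopping of $X_n-(2\rho-1)n$ at $T\wedge n$), which is fine to cite but worth acknowledging. The second half of your argument, Doob at $\tau\wedge t$ followed by monotone convergence to get $\EE[N(\tau)]=\alpha\,\EE[\tau]$, coincides with the paper's.
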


\begin{proof}
 The proof is similar to the proof of Theorem 4.4 from \cite{GPM17}. 
 For $t \geq 0$, the stopping time $\tau \wedge t$ is bounded. So, by applying 
 Doob's Theorem (\cite{W}) to the martingales $N (t) -\alpha t$ and $N' (t) - \alpha' t$, we get
  \begin{align*}
    \alpha \mathbb{E} [\tau \wedge t]  &=  \mathbb{E} [N (\tau \wedge t)]\\
    & = \mathbb{E} [N (\tau \wedge t) | \tau \leq t] \mathbb{P}
    [\tau \leq t] +\mathbb{E} [N (\tau \wedge t) | \tau > t] \mathbb{P}
    [\tau > t]\\
    & = \mathbb{E} [N (\tau) | \tau \leq t] \mathbb{P} [\tau
    \leq t] +\mathbb{E} [N (t) | \tau > t] \mathbb{P} [\tau > t]\\
    & = \mathbb{E} [N' (\tau) + 1| \tau \leq t] \mathbb{P} [\tau
    \leq t] +\mathbb{E} [N (t)] \mathbb{P} [\tau > t]\\
    & = \mathbb{E} [N' (\tau) | \tau \leq t] \mathbb{P} [\tau
    \leq t] + \alpha t\mathbb{P} [\tau > t] +\mathbb{P} [\tau \leq
    t]\\
    & = \mathbb{E} [N' (\tau \wedge t) | \tau \leq t] \mathbb{P}
    [\tau \leq t] +\mathbb{E} [N' (\tau \wedge t) | \tau > t]
    \mathbb{P} [\tau > t] \\ 
    &  -\mathbb{E} [N' (\tau \wedge t) | \tau > t]
    \mathbb{P} [\tau > t] + \alpha t\mathbb{P} [\tau > t] +\mathbb{P} [\tau \leq t]\\
    & = \mathbb{E} [N' (\tau \wedge t)] -\mathbb{E} [N' (t) | \tau > t]
    \mathbb{P} [\tau > t] + \alpha t\mathbb{P} [\tau > t] +\mathbb{P} [\tau
    \leq t]\\
    & = \alpha' \mathbb{E} [\tau \wedge t] - \alpha' t\mathbb{P} [\tau >
    t] + \alpha t\mathbb{P} [\tau > t] +\mathbb{P} [\tau \leq t]
  \end{align*}
  So we have $(\alpha - \alpha')  (\mathbb{E} [\tau \wedge t] - t\mathbb{P} [\tau >t]) =\mathbb{P} [\tau \leq t]$ and
$$
(\alpha - \alpha') \mathbb{E} [\tau \wedge t| \tau \leq t] \mathbb{P} [\tau \leq t] = 
(\alpha - \alpha')  (\mathbb{E}  [\tau \wedge t] - t\mathbb{P} [\tau > t]) = \mathbb{P} [\tau \leq t] \ .
$$
  Therefore,
  $$
    \mathbb{E} [\tau \boldsymbol{1}_{\tau \leq t}] =  \frac{\mathbb{P}
    [\tau \leq t]}{\alpha - \alpha'} \ .
  $$
  Making $t \rightarrow \infty$, by monotone  convergence we have $\tau \in L^1$ and 
  $\mathbb{E} [\tau] = \frac{1}{\alpha - \alpha'}$. Moreover, using Doob's Theorem again, 
  we have for $t>0$,

  \begin{align*}
    \mathbb{E} [N(\tau) \boldsymbol{1}_{\tau \leq t}]
    & = \mathbb{E} [N (\tau \wedge t)] - \mathbb{E} [N (t) | \tau > t] 
    \mathbb{P} [\tau > t] \\
    &= \mathbb{E} [\tau \wedge t] - \alpha t\mathbb{P} [\tau > t] \\
    &= \alpha \mathbb{E} [\tau \boldsymbol{1}_{\tau \leq t}]
  \end{align*}
  So, using the monotone convergence theorem again, 
  we get $\mathbb{E} [N(\tau)] = \frac{\alpha}{\alpha - \alpha'}$
\end{proof}

\newpage

\end{document}